\documentclass[letterpaper,11pt]{article}
\usepackage[margin=1in]{geometry}
\usepackage[T1]{fontenc}
\usepackage{amsmath,amssymb,amsthm,mathtools}
\usepackage{newtxtext,newtxmath}
\usepackage[ruled,vlined]{algorithm2e}
\SetAlgoSkip{}
\SetAlCapSkip{0.2em}
\SetInd{0.35em}{0.6em}
\usepackage{booktabs}
\usepackage{placeins}
\usepackage{graphicx}
\usepackage[numbers,sort&compress]{natbib}
\usepackage{enumitem}
\setlist{topsep=1.5pt plus 1pt minus 0.5pt,itemsep=0.5pt plus 0.5pt,parsep=0pt,leftmargin=*}
\usepackage{titlesec}
\titleformat{\section}{\normalfont\large\bfseries}{\thesection}{0.6em}{}
\titleformat{\subsection}{\normalfont\normalsize\bfseries}{\thesubsection}{0.6em}{}
\titleformat{\subsubsection}{\normalfont\normalsize\itshape}{\thesubsubsection}{0.6em}{}
\titlespacing*{\section}{0pt}{1.35ex plus .45ex minus .2ex}{0.65ex plus .2ex}
\titlespacing*{\subsection}{0pt}{1.0ex plus .35ex minus .2ex}{0.45ex plus .15ex}
\titlespacing*{\subsubsection}{0pt}{0.8ex plus .25ex minus .15ex}{0.35ex plus .1ex}
\usepackage[colorlinks=true,allcolors=black]{hyperref}
\usepackage[capitalise,nameinlink]{cleveref}
\usepackage{doi}

\crefname{algocf}{algorithm}{algorithms}
\Crefname{algocf}{Algorithm}{Algorithms}
\SetKwInput{KwIn}{Input}
\SetKwInput{KwOut}{Output}
\SetKwFor{For}{for}{do}{}
\SetKwIF{If}{ElseIf}{Else}{if}{then}{else if}{else}{}

\AtBeginDocument{%
  \setlength{\abovedisplayskip}{6pt plus 2pt minus 2pt}%
  \setlength{\belowdisplayskip}{6pt plus 2pt minus 2pt}%
  \setlength{\abovedisplayshortskip}{4pt plus 2pt minus 1pt}%
  \setlength{\belowdisplayshortskip}{4pt plus 2pt minus 1pt}%
}

\makeatletter
\def\thm@space@setup{%
  \thm@preskip=5pt plus 1pt minus 1pt%
  \thm@postskip=5pt plus 1pt minus 1pt%
}
\renewcommand{\maketitle}{%
  \begingroup
  \begin{center}
  {\LARGE\bfseries \@title\par}
  \vspace{0.55em}
  {\normalsize \@author\par}
  \end{center}
  \vspace{-0.55em}
  \endgroup
}
\makeatother
\renewenvironment{abstract}{%
  \begin{center}{\bfseries Abstract}\end{center}%
  \vspace{-0.6em}\noindent
}{\par\vspace{0.35em}}

\newtheorem{theorem}{Theorem}[section]
\newtheorem{lemma}{Lemma}[section]
\newtheorem{claim}{Claim}[section]

\newtheorem{corollary}{Corollary}[section]
\theoremstyle{definition}

\theoremstyle{remark}
\newtheorem{remark}{Remark}[section]
\theoremstyle{plain}

\title{Reducing Prize-Collecting Stroll and Related Routing Problems to Prize-Collecting TSP}

\author{%
Hong Li\\
\small School of Mathematics and Statistics, Yunnan University\\
\small \texttt{honglimath@126.com}
}
\date{}
\begin{document}

\maketitle

\begin{abstract}\small
The prize-collecting stroll is the path version of the prize-collecting TSP. Given a complete metric graph, two distinct prescribed terminal vertices $s, t$, and nonnegative penalties on vertices, the prize-collecting stroll asks for an $s$-$t$ tour minimizing its length plus the total penalty of vertices that are not visited by it. We study a common generalization of the prize-collecting stroll and several related prize-collecting routing problems, which we call the prize-collecting-$\Phi$-TSP. In this model, $\Phi$ specifies a set of prescribed vertices together with their parity and connectivity requirements. We show that, if a $\rho$-approximation algorithm for the prize-collecting TSP is available, then, for every fixed $\varepsilon>0$, there is a polynomial-time $(\rho+\varepsilon)$-approximation algorithm for the prize-collecting-$\Phi$-TSP when the number of prescribed vertices is bounded by a fixed constant. Consequently, the prize-collecting stroll can be approximated as well as the prize-collecting TSP up to an arbitrarily small additive loss in the approximation ratio. This yields a better-than-$1.6$-approximation algorithm for the prize-collecting stroll, improving the previous best-known approximation guarantee of $1.6662$.
\end{abstract}
\medskip

\section{Introduction}
\label{sec:intro}

\subsection{The PC-\texorpdfstring{$\Phi$}{Phi}-TSP and related prize-collecting routing problems}

We first describe the prize-collecting-$\Phi$-TSP (PC-$\Phi$-TSP). A PC-$\Phi$-TSP instance consists of a complete metric graph $G=(V,E)$ with edge lengths $\ell:E\to\mathbb R_{\ge 0}$, nonnegative vertex penalties $\pi:V\to\mathbb R_{\ge 0}$, and an interface $\Phi=(I,Q,\mathcal P)$. Here $I\subseteq V$ is a set of prescribed vertices, $Q\subseteq I$ is a set of even cardinality whose vertices are prescribed to have odd degree, and $\mathcal P$ is a partition of $I$ that specifies the required connectivity among the prescribed vertices. The problem asks for a multiset $F$ of edges of $G$. Let $V(F)$ be the set of vertices incident to $F$, and regard the vertices in $V(F)\cup I$ as visited by $F$. The multigraph $(V(F)\cup I,F)$ is required to satisfy: \textbf{(i) parity constraint:} the vertices in $Q$ have odd degree and all vertices in $(V(F)\cup I)\setminus Q$ have even degree; \textbf{(ii) $I$-connectivity constraint:} every connected component contains at least one vertex of $I$; and \textbf{(iii) $\mathcal P$-connectivity constraint:} for every $C\in\mathcal P$, all vertices of $C$ lie in the same connected component. The objective is to minimize $\sum_{e\in F}\ell(e)+\sum_{v\in V\setminus (V(F)\cup I)}\pi(v)$. Figure~\ref{fig:pc_phi_example} illustrates the PC-$\Phi$-TSP.

In this paper, we describe tours as edge multisets rather than as walks. The PC-$\Phi$-TSP generalizes several prize-collecting routing problems. In a complete metric graph $G=(V,E)$ with nonnegative penalties on the vertices in $V$, given a root $r\in V$, the prize-collecting TSP (PCTSP) asks for a possibly empty tour rooted at $r$ that minimizes its length plus the total penalty of vertices not visited by it. This is exactly the PC-$\Phi$-TSP with $I=\{r\}$, $Q=\emptyset$, and $\mathcal P=\{\{r\}\}$. Given two distinct terminals $s, t\in V$, the prize-collecting stroll (PCS) asks for an $s$-$t$ tour that minimizes its length plus the total penalty of vertices not visited by it. This is exactly the PC-$\Phi$-TSP with $I=Q=\{s,t\}$ and $\mathcal P=\{\{s,t\}\}$. The PC-$\Phi$-TSP also generalizes the prize-collecting connected $T$-join problem~\cite{CheriyanFriggstadGao2015} and a prize-collecting analogue of the multiple TSP~\cite{XuRodrigues2015}.

\subsection{Related work}

The PCTSP is a special case of a more general problem introduced by Balas~\cite{Balas1989}. Bienstock, Goemans, Simchi-Levi, and Williamson~\cite{BGW1993} gave a $2.5$-approximation based on threshold rounding, and Goemans and Williamson~\cite{GW1995} obtained a $2$-approximation by a primal-dual method. After subsequent improvements~\cite{ArcherBHK2011,Goemans2009,BlauthNagele2023}, Blauth, Klein, and N{\"a}gele~\cite{BKN2026} gave the best-known approximation ratio, slightly below $1.6$.

The PCS is the path version of the PCTSP and is also known as the prize-collecting path problem. A modification of the primal-dual approach of Goemans and Williamson~\cite{GW1995} gives a $2$-approximation~\cite{ChaudhuriGRT2003}. This approximation ratio was improved~\cite{ArcherBHK2011,AnKS2015}, and Blauth, Klein, and N{\"a}gele~\cite{BKN2026} obtained the best-known approximation ratio, $1.6662$. Archer and Blasiak~\cite{ArcherBlasiak2010} also used a variant of the PCS in which only one terminal is fixed, as a tool in approximation algorithms for the minimum latency problem.

Our work is also connected to the relation between the metric TSP and the path TSP. In a complete metric graph, the metric TSP asks for a shortest tour visiting all vertices, while the path TSP asks for a shortest $s$-$t$ tour visiting all vertices for distinct terminals $s, t$. Christofides~\cite{Christofides1976} gave the classical $1.5$-approximation for the metric TSP; Karlin, Klein, and Oveis Gharan~\cite{KarlinKleinOveisGharan2024,KarlinKleinOveisGharanIPCO2023} improved the approximation ratio slightly below $1.5$. For the path TSP, Hoogeveen~\cite{Hoogeveen1991} showed that the natural adaptation of Christofides' algorithm~\cite{Christofides1976} has approximation ratio $1.6667$. After improvements~\cite{AnKS2015,Sebo2013,Vygen2016,GottschalkVygen2018,SeboVanZuylen2019}, Zenklusen~\cite{Zenklusen2019} gave a $1.5$-approximation.

Traub, Vygen, and Zenklusen~\cite{TVZ2022} introduced the $\Phi$-TSP, a generalization of the metric TSP and the path TSP in which all vertices must be visited and the interface $\Phi=(I,Q,\mathcal P)$ has the same form as in the definition of the PC-$\Phi$-TSP. They showed that, when $|I|$ is bounded by a fixed constant, any $\alpha$-approximation algorithm for the metric TSP yields, for every fixed $\varepsilon>0$, an $(\alpha+\varepsilon)$-approximation algorithm for the $\Phi$-TSP. Their result implies that the path TSP can be approximated as well as the metric TSP, up to an arbitrarily small additive loss. Combined with the algorithm of Karlin, Klein, and Oveis Gharan~\cite{KarlinKleinOveisGharan2024,KarlinKleinOveisGharanIPCO2023}, this gives a better-than-$1.5$-approximation for the path TSP.

However, the analogous relation between the PCTSP and the PCS was unknown. Even after the work of Traub, Vygen, and Zenklusen~\cite{TVZ2022}, the best-known approximation ratios for these two problems remained separated: slightly below $1.6$ for the PCTSP and $1.6662$ for the PCS~\cite{BKN2026}. Our result closes this gap up to an arbitrarily small additive loss: the PCS can be approximated as well as the PCTSP up to such a loss.

\subsection{Our results}

Our main result shows that, for interfaces $\Phi=(I,Q,\mathcal P)$ with $|I|$ bounded by a fixed constant, the PC-$\Phi$-TSP can be approximated as well as the PCTSP up to an arbitrarily small additive loss in the approximation ratio.

\begin{theorem}
\label{thm:main}
Suppose that there is a polynomial-time $\rho$-approximation algorithm for the PCTSP. Then, for every fixed positive integer $k$ and every fixed $\varepsilon>0$, there is a polynomial-time $(\rho+\varepsilon)$-approximation algorithm for the PC-$\Phi$-TSP on instances with interface $\Phi=(I,Q,\mathcal P)$ satisfying $|I|\le k$.
\end{theorem}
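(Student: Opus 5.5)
We would follow the strategy of Traub, Vygen, and Zenklusen~\cite{TVZ2022} for the $\Phi$-TSP and adapt each step to the prize-collecting setting. The plan is a dynamic program over interfaces: we reduce PC-$\Phi$-TSP to a bounded number of subproblems, each either solved by the PCTSP oracle or handled recursively on an interface with fewer prescribed vertices. Throughout we fix an optimal solution $F^*$ of value $\mathrm{OPT}=\ell(F^*)+\pi(\text{unvisited})$, and we guess (by enumeration, polynomial since $|I|\le k$) its combinatorial structure restricted to $I$.

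\textbf{Structure of $F^*$.} First we show that we may assume $F^*$ is a connected $Q$-join plus extra components, each containing a vertex of $I$. Contracting each component $K$ of $(V(F^*)\cup I,F^*)$ and shortcutting, $F^*$ splits into at most $|I|$ connected pieces. Each piece is an Eulerian multigraph, or a $Q_K$-join with $|Q_K|\le k$ odd vertices, on vertex set containing $I_K=I\cap K$.

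Next, for each piece we decompose along a spanning structure. Following~\cite{TVZ2022}, we guess a constant number of "heavy" edges and vertices. Concretely, fix $\delta=\delta(\varepsilon,k)$. The odd-degree requirement on $Q$ can be fixed by adding to $F^*$ a collection of at most $k/2$ paths pairing up $Q$. If some pairing path has length at most $\delta\cdot\mathrm{OPT}$, we pay it. Otherwise the tour is "long", and we use the TVZ dynamic-programming trick: guess $O_{k,\varepsilon}(1)$ vertices along the optimal walks, breaking the solution into segments with fewer interface vertices, or into segments whose endpoints are close relative to their length.

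\textbf{Algorithmic core.} The algorithm proceeds recursively on $|I|$. For $|I|=1$ with $Q=\emptyset$, the problem is PCTSP and we call the oracle. For a general interface we enumerate:
\begin{enumerate}
\item the partition of $I$ into components of $F^*$;
\item a pairing of $Q$;
\item a constant number of guessed vertices $w_1,\dots,w_m$ on the optimal walk, together with distance guesses rounded to powers of $(1+\delta)$.
\end{enumerate}

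In the "short" case, the $I$-vertices within one component $K$ are within distance $\delta\,\mathrm{OPT}/k^2$ of one another. We contract $I_K$ to one root $r_K$, set penalties of visited-required vertices to infinity, and call the PCTSP oracle. Uncontracting costs $O(\delta)\mathrm{OPT}$: we add a spanning tree on $I_K$ and pairing paths for $Q_K$. We must also handle different components $K$ sharing vertices. Since penalties are paid only once, we solve a single PCTSP instance on the graph with all roots identified. A tour rooted at the merged root then splits into per-root parts of total cost at most $\rho\,\mathrm{OPT}$. Separation of components is not required; constraint (ii) allows components to merge, so merging is harmless.

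In the "long" case, some $I$-vertex pair in a component is far apart. Here we use the TVZ idea of splitting the optimal $s$-$t$ walk at guessed cut vertices into subwalks $s$-$w_1$, $w_1$-$w_2$, and so on. The middle subwalks are handled by recursion with interface $\{w_i,w_{i+1}\}$. This requires the vertex-disjointness of penalty accounting: each vertex is assigned to exactly one subproblem via a guessed Voronoi-type partition. This is not available in the prize-collecting world, so instead we process subproblems sequentially, setting penalties of already visited vertices to zero. Sum over subproblems gives $\sum_i\rho\,\mathrm{OPT}_i\le\rho\,\mathrm{OPT}$, because $F^*$ restricted to piece $i$, with unvisited penalties assigned to the piece that collects them in $F^*$ (or to any piece), is feasible for subproblem $i$.

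\textbf{Main obstacle.} The hardest step is the base reduction of a two-terminal problem (PCS) to PCTSP with only $\varepsilon$ loss. Simply contracting $s,t$ into a root loses $\ell(s,t)$, which can be $\Theta(\mathrm{OPT})$. I would attack this as TVZ do: guess a long chain of vertices $s=w_0,\dots,w_m=t$ on the optimal path at distances forming a geometric sequence. The idea is that the optimal path, restricted to a ball around consecutive guesses, is a cycle-like detour whose endpoints are close compared to its length. Each such segment can be solved as PCTSP after contracting its endpoints, at additive cost $\delta\cdot$ its length; the direct $s$-$t$ backbone is paid exactly via the guessed points. Making this charging argument work with penalties, so that the loss is $\varepsilon\,\mathrm{OPT}$ rather than $\varepsilon\cdot\ell(F^*)$, is where I expect the real difficulty. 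I would first try bounding the backbone by $\ell(F^*)\le\mathrm{OPT}$ and choosing $m=O(1/\varepsilon)$ via the pigeonhole over scales.
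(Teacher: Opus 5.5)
Your proposal does not yet prove the theorem. The step you flag as the main obstacle is the whole difficulty, and the mechanism you sketch for it fails.

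\textbf{Where your route breaks.} Suppose the optimal $s$-$t$ walk runs nearly straight, visiting many high-penalty vertices close to the segment $st$. Then every subwalk between guessed points $w_i,w_{i+1}$ has $\ell(w_iw_{i+1})$ close to its own length. So no choice of scales makes the segments \emph{cycle-like}. After contracting $w_i,w_{i+1}$, the PCTSP optimum is still about the segment length. The parity/connectivity repair when uncontracting costs up to $\ell(w_iw_{i+1})$. The total loss can therefore reach $\sum_i\ell(w_iw_{i+1})\ge\ell(st)=\Theta(\mathrm{OPT})$, giving a ratio near $\rho+1$. Two further steps also fail:
\begin{enumerate}
\item Recursing on interfaces $\{w_i,w_{i+1}\}$ does not decrease $|I|$. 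Your induction on $|I|$ therefore has no base for two-terminal instances, which is exactly the PCS case.
\item The penalty accounting is incorrect. $F^*$ restricted to piece $i$ is feasible for subproblem $i$, but there it pays the penalties of every vertex that $F^*$ visits in \emph{other} pieces, unless an earlier computed solution happened to visit them. So $\sum_i\mathrm{OPT}_i\le\mathrm{OPT}$ does not follow.
\end{enumerate}

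\textbf{The missing idea.} The paper adapts a bootstrapping argument from Traub, Vygen, and Zenklusen that is driven by LP duality rather than by distances.
\begin{enumerate}
\item The basic algorithm contracts $I$ to a root, runs the PCTSP algorithm, repairs parity with a minimum $T$-join $J$, and repairs $\mathcal P$-connectivity with a doubled Steiner forest. It costs at most $\rho\,\mathrm{OPT}+\ell(J)+4\sum_S y^{\mathcal D}_S$. Both repair costs are certified by dual solutions supported on laminar families $\mathcal L_T,\mathcal L_{\mathcal D}$ of width at most $|I|$.
\item A dynamic program over such a family guesses the at most $h$ edges of $F^*$ crossing each lightly crossed set and pays them at ratio $1$. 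It calls a given $\beta$-approximation only on residual parts. The parts are induced subgraphs partitioning $V$, so penalties add exactly. This is the vertex-disjoint accounting you believed was unavailable.
\item If $F^*$ crosses every set of the family, dual charging gives cost at most $\beta\,\mathrm{OPT}-(\beta-1)\bigl(\sum_S y_S-\mathrm{OPT}/(h+1)\bigr)$. Hence either the basic solution is $(\rho+\eta)$-approximate, or the dynamic program improves the ratio to $\beta-\frac{\eta}{10}(\beta-1)$.
\item The subroutine is called on interfaces with up to $|I|+h(w+1)$ prescribed vertices, which is more, not fewer. So instead of recursing on $|I|$, one iterates a constant number of rounds, starting from the $(2\rho+5)$-approximation.
\item The genuinely prize-collecting issue is that $F^*$ need not cross every set of $\mathcal L_T$, since $T=\operatorname{odd}(H)\triangle Q$ may differ from $Q$. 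If $F^*$ misses some $U\in\mathcal L_T$, then $U$ is an eligible split. One recurses on $G[U]$ and $G[\bar U]$, each with strictly fewer prescribed vertices.
\end{enumerate}
None of these ingredients appears in your plan: the dual-certified repair cost, the guessing of crossing edges, the bootstrapping over $\beta$, and the split when the crossing condition fails.
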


The polynomial-time guarantee in Theorem~\ref{thm:main} is for fixed $k$ and fixed $\varepsilon$. Since the PCS is the special case with $I=\{s,t\}$, and hence $|I|=2$, applying Theorem~\ref{thm:main} to the PCS gives the following consequence.

\begin{corollary}
\label{cor:pcs}
Suppose that the PCTSP admits a polynomial-time $\rho$-approximation algorithm. Then, for every fixed $\varepsilon>0$, the PCS admits a polynomial-time $(\rho+\varepsilon)$-approximation algorithm.
\end{corollary}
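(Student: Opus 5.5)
The corollary follows by specializing Theorem~\ref{thm:main}. Fix $\varepsilon>0$ and suppose the PCTSP admits a polynomial-time $\rho$-approximation algorithm. Apply Theorem~\ref{thm:main} with the fixed constant $k=2$ and this $\varepsilon$. This yields a polynomial-time $(\rho+\varepsilon)$-approximation algorithm for every PC-$\Phi$-TSP instance whose interface $\Phi=(I,Q,\mathcal P)$ satisfies $|I|\le 2$.

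Given a PCS instance $(G,\ell,\pi,s,t)$ with $s\neq t$, we build the PC-$\Phi$-TSP instance on the same graph, lengths, and penalties, with $I=Q=\{s,t\}$ and $\mathcal P=\{\{s,t\}\}$. We must check, as stated in the introduction, that feasible solutions correspond with equal objective values.

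\emph{From PCS to PC-$\Phi$-TSP.} An $s$-$t$ tour, viewed as an edge multiset $F$, gives odd degree at $s$ and $t$ and even degree elsewhere. It is connected, so its single component contains $s$ and $t$; hence both connectivity constraints hold.

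\emph{From PC-$\Phi$-TSP to PCS.} Conversely, take a feasible $F$. Every component contains $s$ or $t$, and $s$ and $t$ lie in the same component, so $(V(F)\cup I,F)$ is connected. Its odd-degree vertices are exactly $s$ and $t$. Hence it has an Eulerian $s$-$t$ walk covering every edge of $F$, which is an $s$-$t$ tour visiting exactly $V(F)\cup\{s,t\}$.

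In both directions the visited vertex set is $V(F)\cup I$, so the length plus the total penalty of unvisited vertices is identical. The optimum values therefore coincide. Running the algorithm from Theorem~\ref{thm:main} and converting its output into an Eulerian walk takes polynomial time and gives a PCS solution of cost at most $(\rho+\varepsilon)$ times the optimum.

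No step here is a real obstacle; the only point needing care is the converse direction above, where the constraints must force connectivity and the correct parity so that an Eulerian $s$-$t$ walk exists.
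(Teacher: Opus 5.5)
Your proposal is correct and matches the paper's argument: the paper obtains the corollary by applying Theorem~\ref{thm:main} with $|I|=2$, using the identification of the PCS with the PC-$\Phi$-TSP for $I=Q=\{s,t\}$ and $\mathcal P=\{\{s,t\}\}$ stated in the introduction. Your check that feasible solutions and objective values correspond in both directions fills in details the paper leaves implicit.
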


Using the $1.599$-approximation algorithm for the PCTSP of Blauth, Klein, and N{\"a}gele~\cite{BKN2026}, Theorem~\ref{thm:main} gives a better-than-$1.6$-approximation algorithm for the PC-$\Phi$-TSP whenever $|I|$ is bounded by a fixed constant. In particular, it improves the best-known guarantee for the PCS from $1.6662$ to below $1.6$. The framework also applies beyond the PCS, giving the same guarantee for the prize-collecting connected $T$-join problem with fixed $|T|$~\cite{CheriyanFriggstadGao2015} and for a prize-collecting analogue of the multiple TSP with a fixed number of depots~\cite{XuRodrigues2015}.

\subsection{Overview of the approach and organization}

We now give an overview of the proof of Theorem~\ref{thm:main}. Our approach is inspired by the work of Traub, Vygen, and Zenklusen on the $\Phi$-TSP~\cite{TVZ2022}. The PC-$\Phi$-TSP differs from the $\Phi$-TSP in that a feasible solution need not visit all vertices; one must also decide which vertices to visit. Thus their framework does not apply directly.

We first give a basic algorithm with a relatively large approximation ratio. Given a PC-$\Phi$-TSP instance with $\Phi=(I,Q,\mathcal P)$, if $I=\emptyset$, it returns the empty edge multiset, the only feasible solution. Otherwise, based on a $\rho$-approximation algorithm for the PCTSP, it computes an edge multiset satisfying the $I$-connectivity constraint. This edge multiset may violate the $\mathcal P$-connectivity and parity constraints; we repair them by solving a Steiner forest problem and a minimum $T$-join problem, respectively, obtaining a feasible solution. The two repair steps also produce two laminar families. We design a dynamic program that takes a laminar family as input, calls an approximation algorithm for the PC-$\Phi$-TSP as a subroutine, and returns a feasible solution. With the laminar family from the Steiner forest problem, the dynamic program improves the guarantee of the algorithm used as the subroutine when the $\mathcal P$-connectivity repair cost is large. With the laminar family from the minimum $T$-join problem, it gives the analogous improvement when the parity repair cost is large, provided that some optimum solution crosses every set in this family. This crossing condition may not hold.

A new ingredient is a recursive improvement procedure for an input approximation algorithm for the PC-$\Phi$-TSP. The procedure considers several candidates: the solution returned by the basic algorithm; the solutions returned by applying the dynamic program separately to the laminar families from the Steiner forest problem and the minimum $T$-join problem, where each call of the dynamic program uses the input approximation algorithm as a subroutine; and, when the instance satisfies certain conditions, additional candidates obtained by splitting the instance into smaller instances, solving them recursively, and taking the multiunion of the returned edge multisets. The key point is that either some optimum solution crosses every set in the laminar family obtained from the minimum $T$-join problem, in which case the two improvements from the dynamic program are available, or the instance admits a suitable split into two smaller instances. The procedure returns the best candidate, yielding an improved approximation algorithm.

Following the iterative scheme of Traub, Vygen, and Zenklusen~\cite{TVZ2022}, we start from the basic algorithm and apply this recursive improvement procedure repeatedly to obtain a $(\rho+\varepsilon)$-approximation algorithm. For fixed $\varepsilon$ and $\rho$, a constant number of applications suffices.

The rest of the paper is organized as follows. Section~\ref{sec:prelim} fixes notation and recalls the lemmas used later. Section~\ref{sec:basic} presents the basic algorithm. Section~\ref{sec:dp} gives the dynamic program. Section~\ref{sec:recursive} presents the recursive improvement algorithm and proves Theorem~\ref{thm:main}. Section~\ref{sec:conclusion} concludes the paper.

\begin{figure}[htpb]\small 
    \centering
    \includegraphics[height=0.19\textheight,keepaspectratio]{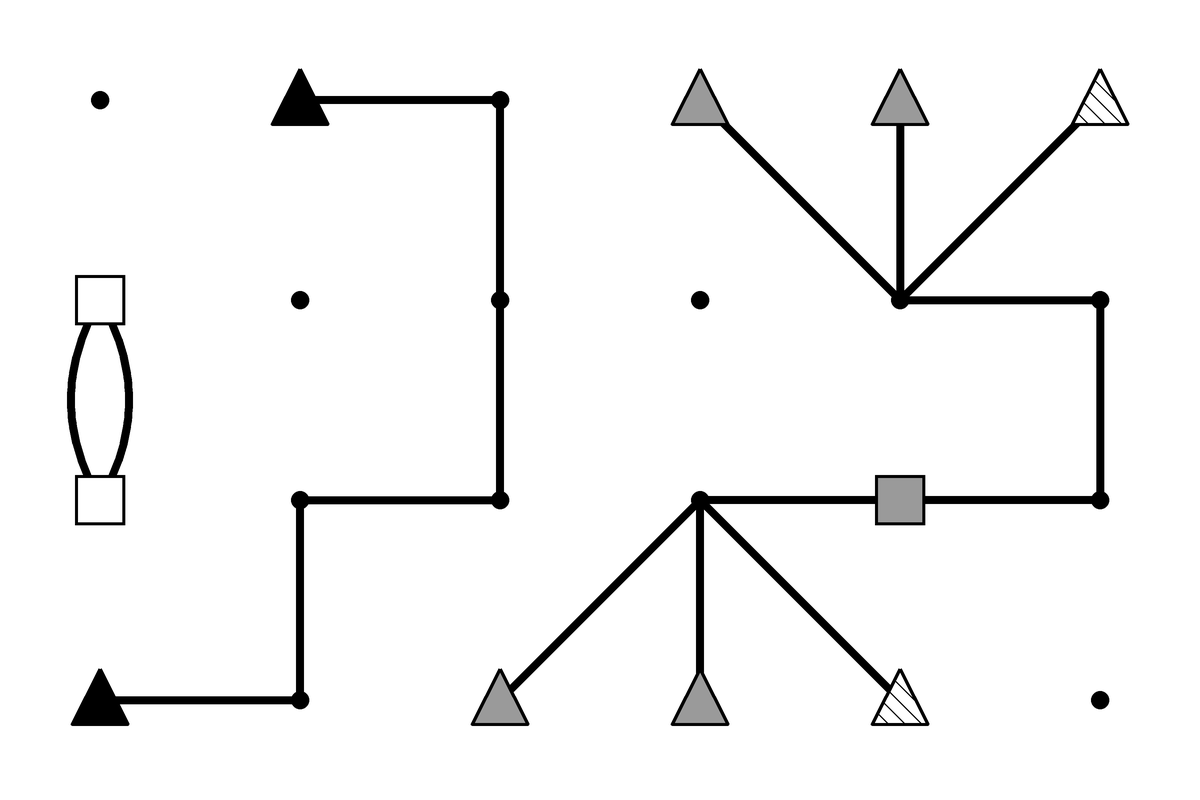}
    \caption{A feasible solution to a PC-$\Phi$-TSP instance. Squares are vertices in $I\setminus Q$, triangles are vertices in $Q$, and vertices in $I$ with the same fill style are contained in the same set $C\in\mathcal P$.}
    \label{fig:pc_phi_example}
\end{figure}

\section{Preliminaries}
\label{sec:prelim}

We fix notation used throughout the paper. All graphs are undirected. Unless stated otherwise, $G=(V,E)$ denotes the complete graph of the input PC-$\Phi$-TSP instance with nonnegative edge lengths $\ell:E\to\mathbb R_{\ge0}$ satisfying the triangle inequality and vertex penalties $\pi:V\to\mathbb R_{\ge0}$. For $S\subseteq V$, $G[S]$ denotes the subgraph of $G$ induced by $S$, with edge set $E(S)$. Since $G$ is complete and metric, so is $G[S]$. The set $\delta(S)$ consists of the edges with exactly one endpoint in $S$. We write $\pi(S):=\sum_{v\in S}\pi(v)$. We use $\triangle$ for symmetric difference, i.e., $S_1\triangle S_2=(S_1\setminus S_2)\cup(S_2\setminus S_1)$. For an edge multiset $F$ and an edge set $E'\subseteq E$, let $F\cap E'$ be the submultiset of $F$ with edges in $E'$, and write $F\subseteq E'$ if $F=F\cap E'$. We use $\uplus$ for multiunion; sums and cardinalities over edge multisets count multiplicities, and intersections and differences of edge multisets are taken with multiplicities in the natural way. Let $V(F)$ be the set of vertices incident to $F$, let $\ell(F):=\sum_{e\in F}\ell(e)$, let $F[S]$ be the submultiset of edges of $F$ with both endpoints in $S$, and let $\operatorname{odd}(F):=\{v:|F\cap \delta(\{v\})|\text{ is odd}\}$. We say that $F$ crosses $S$ if $F\cap\delta(S)\ne\emptyset$. For any interface $\Phi=(I,Q,\mathcal P)$, write $\pi_S^\Phi(F):=\pi(S\setminus(I\cup V(F)))$.

Instead of describing tours as walks in $G$, we view them as edge multisets. A nonempty edge multiset $F$ is called a tour if $(V(F),F)$ is connected and $\operatorname{odd}(F)=\emptyset$. For distinct $s,t$, a nonempty edge multiset $F$ is called an $s$-$t$ tour if $(V(F),F)$ is connected and $\operatorname{odd}(F)=\{s,t\}$. The empty edge multiset is called the empty tour. By Euler's theorem, such edge multisets can be traversed as closed walks and $s$-$t$ walks, respectively.

We next recall laminar families, the Steiner forest problem, and the minimum $T$-join problem, together with the results used later. A family $\mathcal L\subseteq 2^V$ is laminar if, for any two sets $L_1,L_2\in\mathcal L$, either $L_1\cap L_2=\emptyset$, $L_1\subseteq L_2$, or $L_2\subseteq L_1$. A subfamily of $\mathcal L$ is pairwise disjoint if its sets are mutually disjoint. The width of $\mathcal L$, denoted by $\operatorname{width}(\mathcal L)$, is the maximum size of a pairwise disjoint subfamily of $\mathcal L$; in particular, $\operatorname{width}(\emptyset)=0$.

In a graph $G=(V,E)$ with edge lengths $\ell$, given demand pairs $\mathcal D\subseteq \binom{V}{2}$, the Steiner forest problem asks for an edge set $K\subseteq E$ minimizing $\ell(K)$ such that $u$ and $v$ lie in the same connected component of $(V,K)$ for every $\{u,v\}\in\mathcal D$; given a set $T\subseteq V$ of even cardinality, the minimum $T$-join problem asks for an edge set $J\subseteq E$ minimizing $\ell(J)$ subject to $\operatorname{odd}(J)=T$. These two problems admit linear programming relaxations and duals of the following common form.
\[
\begin{array}{@{}rll@{\qquad\qquad}rll@{}}
(\mathcal S\text{-}\mathrm{LP})\quad
\min & \displaystyle\sum_{e\in E}\ell(e)x_e
&
&
(\mathcal S\text{-}\mathrm{D})\quad
\max & \displaystyle\sum_{S\in\mathcal S} y_S
&
\\[1ex]
\mathrm{s.t.} & \displaystyle\sum_{e\in\delta(S)}x_e\ge 1 & \forall S\in\mathcal S,
&
\mathrm{s.t.} & \displaystyle\sum_{S\in\mathcal S: e\in\delta(S)} y_S\le \ell(e) & \forall e\in E,
\\
& x_e\ge0 & \forall e\in E,
&
& y_S\ge0 & \forall S\in\mathcal S.
\end{array}
\]
For a Steiner forest instance with demand pairs $\mathcal D$, let $\mathcal S_{\mathcal D}:=\{S\subseteq V: |S\cap\{u,v\}|=1 \text{ for some } \{u,v\}\in\mathcal D\}$. Taking $\mathcal S=\mathcal S_{\mathcal D}$ gives the linear programming relaxation for the Steiner forest instance and its dual, denoted by $(\mathrm{SF}\text{-}\mathrm{LP})$ and $(\mathrm{SF}\text{-}\mathrm{D})$. For a minimum $T$-join instance, let $\mathcal S_T:=\{S\subseteq V: |S\cap T|\text{ is odd}\}$. Taking $\mathcal S=\mathcal S_T$ gives the linear programming relaxation for the minimum $T$-join problem and its dual, denoted by $(\mathrm{Join}\text{-}\mathrm{LP})$ and $(\mathrm{Join}\text{-}\mathrm{D})$. We use the following results on the two problems.

\begin{lemma}
\label{lem:sf-gw}
For every Steiner forest instance, the primal-dual algorithm of Goemans and Williamson~\cite{GW1995} computes in polynomial time a feasible solution $K$ to the instance, a feasible solution $y^{\mathcal D}$ to $(\mathrm{SF}\text{-}\mathrm{D})$, and a laminar family $\mathcal L_{\mathcal D}:=\{S\in\mathcal S_{\mathcal D}:y^{\mathcal D}_S>0\}$ such that $\ell(K)\le 2\sum_{S\in\mathcal L_{\mathcal D}} y^{\mathcal D}_S$. By weak duality, $\sum_{S\in\mathcal L_{\mathcal D}}y^{\mathcal D}_S$ is a lower bound on the optimum value of the Steiner forest instance.
\end{lemma}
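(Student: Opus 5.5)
This statement is the classical Goemans--Williamson primal-dual result for Steiner forest, so the plan is to recall their algorithm and check the three claimed properties: feasibility of $K$, feasibility and laminarity of $y^{\mathcal D}$, and the factor-$2$ bound. The algorithm keeps a forest $K$, initially empty, and a dual $y\equiv 0$. A component $S$ of $(V,K)$ is \emph{active} if $S\in\mathcal S_{\mathcal D}$, that is, if it separates some demand pair. The duals of all active components are raised uniformly until some edge $e$ between two distinct components becomes tight, meaning $\sum_{S:e\in\delta(S)}y_S=\ell(e)$. That edge is added to $K$, and the process repeats until no component is active. A final reverse-delete step then removes every edge of $K$ whose removal keeps every demand pair connected.

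\textbf{Feasibility.} Dual feasibility holds because raising stops exactly when an edge becomes tight, and tight edges are never overpaid afterwards: they lie inside a merged component, so no further set containing exactly one of their endpoints is raised. Primal feasibility holds because the algorithm stops only when no component separates a demand pair, and reverse delete preserves all demand connections. Laminarity of $\mathcal L_{\mathcal D}$ follows because every set with positive dual is a component at some moment, and components only merge over time. Each such set lies in $\mathcal S_{\mathcal D}$ because only active sets are raised.

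\textbf{The factor $2$.} This is the main step. Since every edge of $K$ is tight,
\[
\ell(K)=\sum_{e\in K}\sum_{S\ni e\text{ in }\delta(S)}y_S=\sum_S y_S\,|K\cap\delta(S)|.
\]
It therefore suffices to show that in each phase, with $\mathcal A$ the set of active components,
\[
\sum_{S\in\mathcal A}|K\cap\delta(S)|\le 2|\mathcal A|.
\]
To prove this, contract the current components to get a forest $H$ on the final $K$ restricted to these nodes, and delete isolated nodes. The key point is that every inactive node of $H$ has degree at least $2$. Otherwise, a leaf that is inactive, i.e.\ separates no pair, would have its incident edge removed by reverse delete. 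Since the sum of degrees in a forest is less than twice the number of nodes, the sum of active degrees is at most $2(\#\text{nodes})-2\#\text{inactive}$, which is at most $2|\mathcal A|$. Summing over phases gives $\ell(K)\le 2\sum_S y_S$.

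The weak-duality remark is immediate. Any feasible Steiner forest $K^*$ gives an integral solution to $(\mathrm{SF}\text{-}\mathrm{LP})$, because it crosses every $S\in\mathcal S_{\mathcal D}$. Hence $\ell(K^*)\ge\sum_S y_S$.

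Polynomial running time follows because each phase merges two components, so there are at most $|V|-1$ phases, and each phase takes polynomial time.

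The main obstacle is the degree argument. One has to justify carefully that reverse delete leaves no inactive leaf in each intermediate contracted forest. The reason is that a set which is inactive at that time contains both endpoints of every pair it touches, so cutting the leaf edge preserves every demand connection; hence that edge would have been deleted.
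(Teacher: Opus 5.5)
Your proposal is correct. It is the standard Goemans--Williamson primal--dual analysis: grow the duals of active components, add tight edges, prune, and bound the degrees in the contracted forest. The paper invokes exactly this result by citing Goemans and Williamson~\cite{GW1995} and gives no proof of its own. The one step worth writing out explicitly is the one you flag yourself. If an inactive current component $C$ had exactly one incident edge $e$ of the pruned forest, then any demand path in that forest using $e$ would cross $\delta(C)$ exactly once, so $C$ would separate that pair. Hence $e$ lies on no demand path, and pruning would have removed it.
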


\begin{lemma}
\label{lem:tjoin-laminar}
For every minimum $T$-join instance, one can compute in polynomial time an optimum solution $J\subseteq E$, and $(\mathrm{Join}\text{-}\mathrm{LP})$ has optimum value $\ell(J)$; see, e.g.,~\cite{Schrijver2003}. One can compute an optimum solution $y^T$ to $(\mathrm{Join}\text{-}\mathrm{D})$ such that the family $\mathcal L_T:=\{S\in\mathcal S_T:y^T_S>0\}$ is laminar; this follows from a standard uncrossing procedure~\cite{Karzanov1996,TVZ2022}. When $T=\emptyset$, we take $J=\emptyset$, $y^T=0$, and $\mathcal L_T=\emptyset$. In all cases, $\sum_{S\in\mathcal L_T} y^T_S=\ell(J)$.
\end{lemma}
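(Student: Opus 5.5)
The plan is to handle the three assertions in order: computing $J$, LP exactness, and the laminar optimal dual. The case $T=\emptyset$ is immediate. The empty set is the unique minimal $T$-join, since lengths are nonnegative. Moreover $\mathcal S_T=\emptyset$, so $(\mathrm{Join}\text{-}\mathrm{D})$ has the single solution $y^T=0$ and both sides of $\sum_{S\in\mathcal L_T}y^T_S=\ell(J)$ equal $0$. From now on assume $T\neq\emptyset$.

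\textbf{Computing $J$.} Since $G$ is complete and metric, I would compute a minimum-weight perfect matching $M$ on $T$ with weights $\ell$, using Edmonds' blossom algorithm. Then $M$ is itself a $T$-join. It is optimal because every $T$-join decomposes into edge-disjoint paths pairing up $T$, plus cycles. By the triangle inequality, replacing each path by the direct edge between its ends does not increase length, so some minimum $T$-join is a perfect matching on $T$. If $M$ had repeated edges we would take the symmetric difference, but a perfect matching has none, so $J:=M\subseteq E$.

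\textbf{LP exactness.} I would invoke the Edmonds--Johnson theorem. For nonnegative $\ell$, the polyhedron $\{x\ge0:\ x(\delta(S))\ge1\ \forall S\in\mathcal S_T\}$ is the dominant of the convex hull of incidence vectors of $T$-joins. Hence the optimum of $(\mathrm{Join}\text{-}\mathrm{LP})$ equals $\ell(J)$, and by LP duality so does the optimum of $(\mathrm{Join}\text{-}\mathrm{D})$.

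\textbf{Getting an optimal dual with polynomial support.} One option is to solve the LP by the ellipsoid method, using Padberg--Rao minimum $T$-odd cuts as the separation oracle. The alternative I prefer is to take the optimal blossom dual of the matching problem. It is supported on a laminar family of odd subsets of $T$, and it can be lifted to sets in $\mathcal S_T$ by assigning the non-$T$ vertices (for instance, each vertex follows its nearest terminal). Either way we obtain an optimal dual $y$ with polynomially many positive entries. If the lifting fails to give laminarity directly, we simply proceed to the uncrossing step below.

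\textbf{Uncrossing.} Suppose $S,U\in\mathcal L_T$ cross. Since $|T|$ is even, $\mathcal S_T$ is closed under complement, and $\delta(V\setminus U)=\delta(U)$. So, replacing $U$ by its complement if necessary, we may assume that $S\cap U$ and $S\cup U$ are both $T$-odd. (Parity: $|S\cap T|+|U\cap T|=|(S\cap U)\cap T|+|(S\cup U)\cap T|$ is even, so the two are both odd or both even, and complementing $U$ switches between the cases.) Let $m:=\min(y_S,y_U)$. Lower $y_S,y_U$ by $m$ and raise $y_{S\cap U},y_{S\cup U}$ by $m$. The objective is unchanged. Dual feasibility is preserved because of the pointwise inequality $\chi^{\delta(S\cap U)}+\chi^{\delta(S\cup U)}\le\chi^{\delta(S)}+\chi^{\delta(U)}$. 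The potential $\sum_S y_S|S|^2$ strictly increases, which shows that an optimal dual maximizing this potential is laminar.

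\textbf{Main obstacle.} The hard part is making the uncrossing run in polynomial time rather than merely terminate. For this I would follow Karzanov's procedure (as used in \cite{TVZ2022}), which uncrosses a polynomially-supported optimal dual into a laminar one in polynomially many steps. Finally, $\sum_{S\in\mathcal L_T}y^T_S=\ell(J)$ follows from optimality of $y^T$ together with LP exactness.
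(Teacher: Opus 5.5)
Your outline follows the paper, which offers no argument beyond citations: Schrijver for computing $J$ and for exactness of $(\mathrm{Join}\text{-}\mathrm{LP})$ (Edmonds--Johnson), and Karzanov/TVZ for polynomial-time uncrossing into a laminar optimal dual. If you take your ellipsoid/Padberg--Rao option to get a polynomially supported optimal dual, and then Karzanov's procedure, your proof is sound. Computing $J$ as a minimum perfect matching on $T$ is fine in the complete metric graph where the lemma is applied. Two of your supporting claims, however, are false as written and should be dropped or repaired.

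\emph{The nearest-terminal lifting is not dual feasible.} Put $a,u,v,b$ on a line at positions $0,\,4.9,\,5.1,\,10$ with $T=\{a,b\}$. The matching dual $y_{\{a\}}=y_{\{b\}}=5$ is optimal, and lifting gives $\{a,u\}$ and $\{b,v\}$. Both are crossed by $uv$, so the load on $uv$ is $10>\ell(uv)=0.2$. Laminarity is not what breaks: the lifted sets are preimages of a laminar family under the nearest-terminal map, hence laminar. So your fallback to uncrossing cannot rescue this. In addition, the vertex duals of the perfect-matching LP are sign-free, while $(\mathrm{Join}\text{-}\mathrm{D})$ requires $y\ge0$. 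Just use the ellipsoid route.

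\emph{The potential $\sum_S y_S|S|^2$ does not survive complementation.} It changes when $U$ is replaced by $V\setminus U$, so ``strictly increases'' is unjustified in exactly the case where you complement. For instance, suppose $S\cup U=V$, so that $S\cap U$ is $T$-even. Then $V\setminus U\subseteq S$, and your swap on $S,V\setminus U$ is the identity, even though $S,U$ are not laminar.

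The standard repair has three steps:
\begin{enumerate}
\item Among optimal duals (a polytope, since $y_S\le\max_{e\in\delta(S)}\ell(e)$), take one minimizing the complement-invariant potential $\sum_S y_S|S|\,|V\setminus S|$.
\item Because $x\mapsto x(|V|-x)$ is strictly concave, any support pair with all four regions $S\cap U$, $S\setminus U$, $U\setminus S$, $V\setminus(S\cup U)$ nonempty can be swapped on $S,U$ or on $S,V\setminus U$. This stays inside $\mathcal S_T$ and strictly decreases the potential, so the minimizer's support is cross-free.
\item Fix $r\in V$ and replace every support set containing $r$ by its complement, adding the values of sets that coincide. This changes neither the cuts nor $T$-parity, since $|T|$ is even, and the resulting family is laminar.
\end{enumerate}
That gives existence; polynomial time still comes from Karzanov's procedure, as in the paper.
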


\begin{remark}
\label{rem:standard-components}
The pruning step in the algorithm of~\cite{GW1995} for the Steiner forest problem leaves every nonempty component of $(V(K),K)$ with an endpoint of some demand pair. Similarly, whenever we compute a minimum $T$-join, we take a returned solution $J$ such that every connected component of $(V(J),J)$ contains a vertex of $T$.
\end{remark}

\section{The basic algorithm}
\label{sec:basic}

We first design a basic algorithm for the PC-$\Phi$-TSP. Given a PC-$\Phi$-TSP instance with $\Phi=(I,Q,\mathcal P)$, if $I=\emptyset$, then $\Phi=(\emptyset,\emptyset,\emptyset)$, and the empty edge multiset is the only feasible solution.

If $I\ne\emptyset$, we first construct an edge multiset $H$ satisfying the $I$-connectivity constraint. \textbf{(H1)} We construct an auxiliary PCTSP instance. Let $r_I$ be a new root representing $I$. For each $v\in V\setminus I$, choose $\sigma(v)\in I$ such that $\ell(\sigma(v)v)=\min_{w\in I}\ell(wv)$. Let $G_I$ be the graph on $(V\setminus I)\cup\{r_I\}$ in which each edge with both endpoints in $V\setminus I$ has its original length, and each edge $r_Iv$ has length $\ell(\sigma(v)v)$. For $u,v\in (V\setminus I)\cup\{r_I\}$, let $d_I(u,v)$ be the length of a shortest $u$-$v$ path in $G_I$. The complete graph on $(V\setminus I)\cup\{r_I\}$ with edge lengths $d_I$, root $r_I$, penalties $\pi(v)$ for $v\in V\setminus I$, and penalty $0$ for $r_I$ defines the auxiliary PCTSP instance. \textbf{(H2)} We run a polynomial-time $\rho$-approximation algorithm for the PCTSP on this auxiliary instance. \textbf{(H3)} We expand the solution to the auxiliary instance back to the given graph. For each edge $e=uv$ in the solution to the auxiliary instance, replace $e$ by a shortest $u$-$v$ path $P_e$ in $G_I$ and map $P_e$ back to $G$ edge by edge: an edge with both endpoints in $V\setminus I$ remains unchanged, while an edge $r_Iv$ is replaced by the edge $\sigma(v)v$ of $G$. Thus each path $P_e$ becomes a collection of paths in $G$, split at the visits to $r_I$. Let $H$ be the multiunion, over all edges $e$ in the solution to the auxiliary instance, of the edge sets of all paths in the collection obtained from $P_e$.

Since the solution to the auxiliary instance is a possibly empty Eulerian tour rooted at $r_I$, the expansion in $(\mathrm H3)$ implies that, first, every connected component of $(I\cup V(H),H)$ contains a vertex of $I$, so $H$ satisfies the $I$-connectivity constraint; second, degree parity is preserved at all vertices of $V\setminus I$, so $\operatorname{odd}(H)\subseteq I$.

We repair the parity constraint and the $\mathcal P$-connectivity constraint independently. Let $T:=\operatorname{odd}(H)\triangle Q$. Since $Q,\operatorname{odd}(H)\subseteq I$ and both have even cardinality, $T$ is an even-cardinality subset of $I$. We compute a minimum $T$-join $J$; then $\operatorname{odd}(H\uplus J)=\operatorname{odd}(H)\triangle T=Q$, so parity is restored. To restore the $\mathcal P$-connectivity constraint, for each set $C\in\mathcal P$, choose a representative $c_C\in C$ and define $\mathcal D:=\{\{c_C,v\}: C\in\mathcal P,\ v\in C\setminus\{c_C\}\}$. We compute a Steiner forest $K$ for the demand pairs in $\mathcal D$. Then $K$ connects all vertices in each set $C\in\mathcal P$, and adding two copies of $K$ does not change parity. The final edge multiset is $F_{\mathrm{bas}}:=H\uplus J\uplus K\uplus K$. The complete procedure is summarized in Algorithm~\ref{alg:basic} and illustrated in Figure~\ref{fig:basic_algorithm_example}.

\begin{algorithm}[!ht]\small
\caption{\textsc{BasicApprox}$(G,\ell,\pi,\Phi)$}
\label{alg:basic}
\LinesNumbered
\noindent\textbf{Input:} a PC-$\Phi$-TSP instance with $\Phi=(I,Q,\mathcal P)$.\\
\noindent\textbf{Output:} a feasible solution to the input instance  and two laminar families.
\BlankLine
\nl \If{$I=\emptyset$}{
    \Return $F_{\mathrm{bas}}:=\emptyset$, $\mathcal L_T:=\emptyset$, and $\mathcal L_{\mathcal D}:=\emptyset$\;
}
\nl Construct the edge multiset $H$ by $(\mathrm H1)$--$(\mathrm H3)$\;
\nl $T\gets \operatorname{odd}(H)\triangle Q$\;
\nl Using Lemma~\ref{lem:tjoin-laminar}, compute, with respect to $G$, $\ell$, and $T$, a minimum $T$-join $J$ and an optimum solution $y^T$ of $(\mathrm{Join}\text{-}\mathrm{D})$ such that $\mathcal L_T:=\{S\in\mathcal S_T:y^T_S>0\}$ is laminar\;
\nl For each set $C\in\mathcal P$, choose a representative $c_C\in C$, and let $\mathcal D:=\{\{c_C,v\}: C\in\mathcal P,\ v\in C\setminus\{c_C\}\}$\;
\nl Using Lemma~\ref{lem:sf-gw}, compute, with respect to $G$, $\ell$, and $\mathcal D$, a Steiner forest $K$, a feasible solution $y^{\mathcal D}$ of $(\mathrm{SF}\text{-}\mathrm{D})$, and the laminar family $\mathcal L_{\mathcal D}:=\{S\in\mathcal S_{\mathcal D}:y^{\mathcal D}_S>0\}$\;
\nl $F_{\mathrm{bas}}\gets H\uplus J\uplus K\uplus K$\;
\nl \Return $F_{\mathrm{bas}}, \mathcal L_T, \mathcal L_{\mathcal D}$.
\end{algorithm}

The laminar families $\mathcal L_T$ and $\mathcal L_{\mathcal D}$ computed in Algorithm~\ref{alg:basic} will be used as inputs to the dynamic program in Section~\ref{sec:dp}. Let $F^*$ be an optimum solution to the input PC-$\Phi$-TSP instance and set $\mathrm{OPT}:=\ell(F^*)+\pi_V^\Phi(F^*)$. We now analyze the approximation guarantee of Algorithm~\ref{alg:basic}.

\begin{lemma}
\label{lem:basic-guarantee}
Algorithm~\ref{alg:basic} runs in polynomial time and returns a feasible solution
$F_{\mathrm{bas}}$ to the input PC-$\Phi$-TSP instance. If $I\ne \emptyset$, then $\ell(F_{\mathrm{bas}})+\pi_V^\Phi(F_{\mathrm{bas}})
    \le
    \rho\,\mathrm{OPT}
    +\ell(J)
    +4\sum_{S\in\mathcal L_{\mathcal D}} y_S^{\mathcal D}$. Consequently, Algorithm~\ref{alg:basic} is a $(2\rho+5)$-approximation algorithm.
\end{lemma}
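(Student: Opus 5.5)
Polynomial running time is immediate: (H1) uses one shortest-path computation, (H2) is one call to the PCTSP algorithm, and the $T$-join and Steiner forest steps are polynomial by Lemmas~\ref{lem:tjoin-laminar} and~\ref{lem:sf-gw}. Expanding $H$ in (H3) produces polynomially many edges, since each $P_e$ is a shortest path with at most $|V|$ edges.

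For feasibility, the case $I=\emptyset$ is trivial. Otherwise I would check the three constraints one by one.
\begin{itemize}
\item Parity: $\operatorname{odd}(H\uplus J\uplus K\uplus K)=\operatorname{odd}(H)\triangle T=Q$.
\item $\mathcal P$-connectivity: $K$ connects each $C\in\mathcal P$.
\item $I$-connectivity: every component of $(I\cup V(H),H)$ meets $I$, as argued after (H3). By Remark~\ref{rem:standard-components}, every component of $J$ meets $T\subseteq I$ and every component of $K$ meets an endpoint of a demand pair, which lies in $I$. Adding $J$ and $K$ only merges components, each of which contains a vertex of $I$.
\end{itemize}

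For the cost bound, the key step is to compare $H$ with $F^*$. I would show that the auxiliary PCTSP instance has optimum at most $\ell(F^*)+\pi_V^\Phi(F^*)=\mathrm{OPT}$. To see this, contract $I$ to $r_I$ in $F^*$. An edge $uv$ with $u\in I$ and $v\notin I$ maps to $r_Iv$ with $d_I(r_I,v)\le\ell(\sigma(v)v)\le\ell(uv)$. Edges inside $I$ are dropped, and edges inside $V\setminus I$ keep length at least $d_I$. Each component of $F^*$ meets $I$, so after contraction all edges lie in one connected Eulerian multigraph containing $r_I$; if no edge touches $V\setminus I$, it is the empty tour. The contracted solution visits exactly the vertices of $V(F^*)\setminus I$, so its penalty is $\pi_V^\Phi(F^*)$. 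Hence the $\rho$-approximate solution has cost at most $\rho\,\mathrm{OPT}$.

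Expanding it back does not increase length, because each $d_I$-edge is replaced by a path in $G_I$ of equal length mapped to equal-length edges of $G$. The vertices of $V\setminus I$ it visits are still visited by $H$, so $\ell(H)+\pi_V^\Phi(H)\le\rho\,\mathrm{OPT}$. Adding $J$ and $2K$ only adds length and cannot increase penalty. With $\ell(K)\le 2\sum_{S\in\mathcal L_{\mathcal D}} y^{\mathcal D}_S$ from Lemma~\ref{lem:sf-gw}, this gives the stated inequality.

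For the ratio $2\rho+5$, I need to bound the repair costs by $\mathrm{OPT}$.
\begin{itemize}
\item Steiner forest: $F^*$ connects every demand pair, since each pair lies in a common $C\in\mathcal P$. So $F^*$ is feasible for $(\mathrm{SF}\text{-}\mathrm{LP})$ and $\sum_S y^{\mathcal D}_S\le\ell(F^*)\le\mathrm{OPT}$. This contributes at most $4\,\mathrm{OPT}$.
\item $T$-join: this is the main obstacle, since $\ell(J)$ must be charged to $(\rho+1)\mathrm{OPT}$. Here $T\subseteq\operatorname{odd}(H)\triangle Q$ with $T\subseteq I$. I would take the multiset $H\uplus F^*$, whose odd set is $\operatorname{odd}(H)\triangle Q=T$. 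Every $T$-odd set then has odd cut in $H\uplus F^*$, hence is crossed, so $H\uplus F^*$ is a feasible $(\mathrm{Join}\text{-}\mathrm{LP})$ solution. Hence $\ell(J)\le\ell(H)+\ell(F^*)\le\rho\,\mathrm{OPT}+\mathrm{OPT}$.
\end{itemize}
Summing gives $\rho+(\rho+1)+4=2\rho+5$.
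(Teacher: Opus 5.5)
Your proof is correct and follows the paper's argument essentially step by step. You contract $I$ in $F^*$ to bound the auxiliary PCTSP optimum by $\mathrm{OPT}$, use Remark~\ref{rem:standard-components} for $I$-connectivity, use weak duality for the Steiner forest dual, and use $H\uplus F^*$ (whose odd set is $T$) to get $\ell(J)\le(\rho+1)\mathrm{OPT}$. The only differences are cosmetic: you certify that last bound via feasibility of $H\uplus F^*$ for $(\mathrm{Join}\text{-}\mathrm{LP})$ rather than by extracting a $T$-join from it, and you wrote $T\subseteq\operatorname{odd}(H)\triangle Q$ where equality holds by definition.
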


The proof is given in Appendix~\ref{app:deferred-proofs}. It uses three simple facts. $F^*$ induces a solution to the auxiliary PCTSP instance of value at most $\mathrm{OPT}$, so $\ell(H)+\pi_V^\Phi(H)\le \rho\mathrm{OPT}$. Also, $H\uplus F^*$ contains a $T$-join of length at most $\ell(H)+\ell(F^*)$, giving $\ell(J)\le(\rho+1)\mathrm{OPT}$. Finally, since $F^*$ connects every set $C\in\mathcal P$, it contains a Steiner forest for the demand pairs $\mathcal D$ of length at most $\ell(F^*)$, and weak duality gives $\sum_{S\in\mathcal L_{\mathcal D}}y_S^{\mathcal D}\le \ell(F^*)\le \mathrm{OPT}$.

\begin{lemma}
\label{lem:basic-width}
$\mathcal L_{\mathcal D}$ and $\mathcal L_T$ computed in Algorithm~\ref{alg:basic} satisfy $\operatorname{width}(\mathcal L_{\mathcal D})\le |I|$ and $\operatorname{width}(\mathcal L_T)\le |I|$.
\end{lemma}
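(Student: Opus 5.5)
The width bound follows from one observation: every set in either family separates a demand endpoint, or a vertex of $T$, from something. Since all such vertices lie in $I$, pairwise disjoint sets from either family can be charged injectively to vertices of $I$.

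\textbf{The family $\mathcal L_T$.} If $T=\emptyset$, Lemma~\ref{lem:tjoin-laminar} gives $\mathcal L_T=\emptyset$, whose width is $0\le |I|$. Otherwise $\mathcal L_T\subseteq\mathcal S_T$, so every $S\in\mathcal L_T$ satisfies $|S\cap T|$ odd. In particular $S\cap T\neq\emptyset$. Now take a pairwise disjoint subfamily $\{S_1,\dots,S_m\}\subseteq\mathcal L_T$. For each $i$, choose a vertex $t_i\in S_i\cap T$. These vertices are distinct because the sets are disjoint. Since $T\subseteq I$, as established in Section~\ref{sec:basic} (because $Q,\operatorname{odd}(H)\subseteq I$), we get $m\le|T|\le|I|$.

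\textbf{The family $\mathcal L_{\mathcal D}$.} Every $S\in\mathcal L_{\mathcal D}\subseteq\mathcal S_{\mathcal D}$ satisfies $|S\cap\{u,v\}|=1$ for some demand pair $\{u,v\}\in\mathcal D$. Every endpoint of a demand pair lies in some $C\in\mathcal P$, and $\mathcal P$ is a partition of $I$, so all these endpoints are in $I$. Hence every $S\in\mathcal L_{\mathcal D}$ contains a vertex of $I$. The same injective charging as before then bounds any disjoint subfamily by $|I|$. If $\mathcal D=\emptyset$, then $\mathcal S_{\mathcal D}=\emptyset$, so $\mathcal L_{\mathcal D}=\emptyset$ and the bound holds trivially.

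\textbf{The case $I=\emptyset$.} Algorithm~\ref{alg:basic} returns both families empty, so both widths are $0$.

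There is essentially no obstacle. The only point needing care is confirming that $T\subseteq I$ and that all demand endpoints lie in $I$; both follow directly from the construction of the basic algorithm.
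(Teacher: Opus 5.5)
Your proof is correct and follows the paper's argument: every set of $\mathcal L_{\mathcal D}$ separates a demand pair whose endpoints lie in $I$, and every set of $\mathcal L_T$ meets $T\subseteq I$ in an odd number of vertices, so a pairwise disjoint subfamily can be charged injectively to distinct vertices of $I$. Your separate treatment of $T=\emptyset$ and $\mathcal D=\emptyset$ is harmless but not needed, since the general argument already covers those cases.
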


\begin{proof}
If $I=\emptyset$, the claim is immediate. Assume $I\ne\emptyset$. By Lemma~\ref{lem:sf-gw}, every set of $\mathcal L_{\mathcal D}$ belongs to $\mathcal S_{\mathcal D}$, and hence separates a demand pair in $\mathcal D$. Both endpoints of every demand pair in $\mathcal D$ belong to $I$, so every set of $\mathcal L_{\mathcal D}$ contains a vertex of $I$. Therefore every pairwise disjoint subfamily of $\mathcal L_{\mathcal D}$ has size at most $|I|$, and $\operatorname{width}(\mathcal L_{\mathcal D})\le |I|$. By Lemma~\ref{lem:tjoin-laminar}, every set of $\mathcal L_T$ belongs to $\mathcal S_T$, and hence contains an odd number of vertices of $T$. Thus every set of $\mathcal L_T$ contains a vertex of $T$, so every pairwise disjoint subfamily of $\mathcal L_T$ has size at most $|T|$. Hence $\operatorname{width}(\mathcal L_T)\le |T|$. Finally, $T=\operatorname{odd}(H)\triangle Q \subseteq I$, and so $|T|\le |I|$.
\end{proof}

\begin{figure}[htpb]\small 
\centering
\includegraphics[height=0.16\textheight,keepaspectratio]{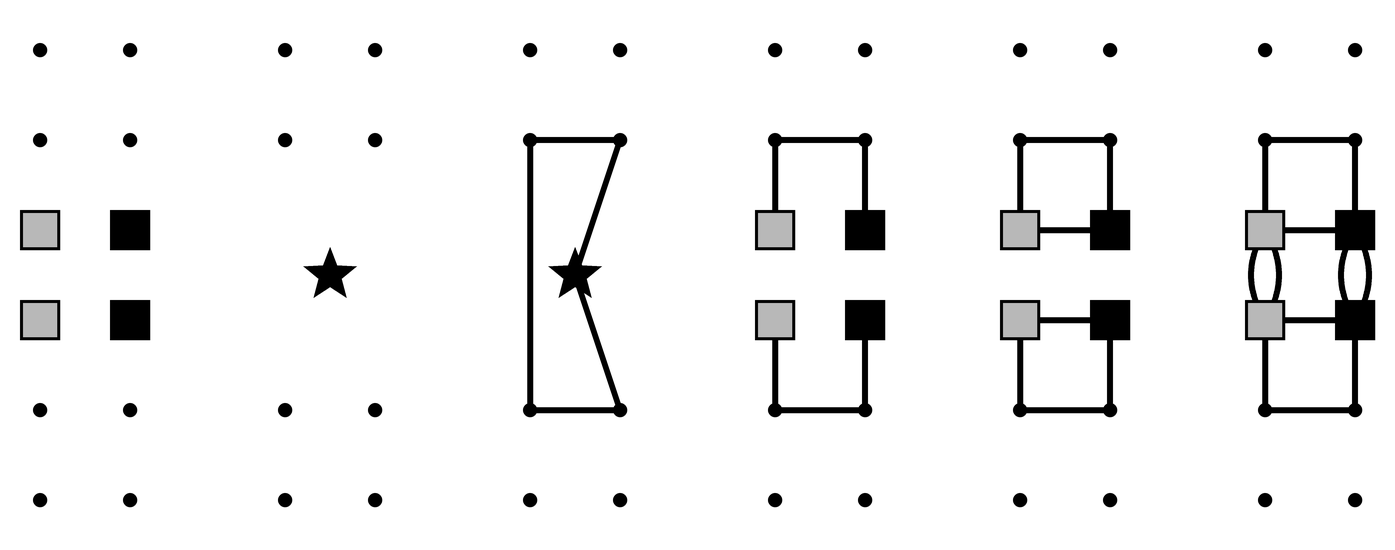}
\caption{An example of the basic algorithm. From left to right: the input PC-$\Phi$-TSP instance; the auxiliary PCTSP instance; a solution to the auxiliary PCTSP instance returned by the $\rho$-approximation algorithm for the PCTSP; the expansion of this solution back to the given graph; the repair of the parity constraint; and the repair of the $\mathcal P$-connectivity constraint. The star denotes the root $r_I$ of the auxiliary PCTSP instance.}
\label{fig:basic_algorithm_example}
\end{figure}

\section{The dynamic program}
\label{sec:dp}

The second algorithm is a dynamic program that, given a PC-$\Phi$-TSP instance with interface $\Phi=(I,Q,\mathcal P)$, a fixed constant $h$, an explicitly given laminar family $\mathcal L$, and a $\beta$-approximation algorithm $\mathcal A_\beta$ for the PC-$\Phi$-TSP that runs in polynomial time on instances whose number of prescribed vertices is bounded by a fixed constant, returns a feasible solution in polynomial time when $|I|$ and $\operatorname{width}(\mathcal L)$ are bounded by constants. As in~\cite{TVZ2022}, the dynamic program guesses edges in an optimum solution; larger $h$ allows more such edges to be guessed, but increases the running time. In Section~\ref{sec:recursive}, we apply the algorithm to $\mathcal L_T$ and $\mathcal L_{\mathcal D}$ computed by Algorithm~\ref{alg:basic}.

The dynamic program constructs and solves instances on certain subgraphs of the given graph. To avoid notational confusion, we call them local PC-$\Psi$-TSP instances, where $\Psi$ denotes the local interface. For $S\subseteq V$, a local interface on $S$ is a triple $\Psi=(I_\Psi,Q_\Psi,\mathcal P_\Psi)$, where $I_\Psi\subseteq S$, $Q_\Psi\subseteq I_\Psi$ has even cardinality, and $\mathcal P_\Psi$ is a partition of $I_\Psi$. The corresponding local PC-$\Psi$-TSP instance is the instance on $G[S]$ with local interface $\Psi$. A local feasible solution for this instance is an edge multiset $F\subseteq E(S)$ such that $(I_\Psi\cup V(F),F)$ satisfies the parity, $I_\Psi$-connectivity, and $\mathcal P_\Psi$-connectivity constraints for $\Psi$. Its local objective value is $\ell(F)+\pi_S^\Psi(F)$.

We now describe the dynamic program. Fix a PC-$\Phi$-TSP instance with $\Phi=(I,Q,\mathcal P)$, a laminar family $\mathcal L$, and a fixed positive integer $h$. Let $w:=\operatorname{width}(\mathcal L)$ and set $B:=|I|+h(w+1)$. For every set $S\in\mathcal L\cup\{V\}$, the dynamic program enumerates all local interfaces $\Psi$ on $S$ with $|I_\Psi|\le B$, including the empty interface $(\emptyset,\emptyset,\emptyset)$. For each such interface $\Psi$, it computes a feasible solution $Z[S,\Psi]$ to the corresponding local PC-$\Psi$-TSP instance. The sets $S\in\mathcal L\cup\{V\}$ are processed in increasing order of $|S|$. At the end, the algorithm returns $Z[V,\Phi]$, which is a feasible solution to the input PC-$\Phi$-TSP instance. When $k,h,w$ are fixed and $|I|\le k$, the number of enumerated interfaces with $|I_\Psi|\le B$ is polynomial in $|V|$.

When a set $S\in\mathcal L\cup\{V\}$ is processed, the order above ensures that all local solutions $Z[S',\Psi']$ with $S'\in\mathcal L$, $S'\subsetneq S$, and $|I_{\Psi'}|\le B$ have already been computed. For each enumerated interface $\Psi_S=(I_S,Q_S,\mathcal P_S)$ on $S$, the algorithm enumerates all child decompositions of $S$. A child decomposition is a possibly empty pairwise disjoint collection $S_1,\ldots,S_p$ of sets of $\mathcal L$ with $S_i\subsetneq S$ for every $i=1,\ldots,p$. If $S$ contains no set $S'\in\mathcal L$ with $S'\subsetneq S$, then the empty child decomposition is the only child decomposition. Let $S_0:=S\setminus\bigcup_{i=1}^p S_i$. We call $S_0$ the residual part and call $S_0,S_1,\ldots,S_p$ the parts of this child decomposition. Since every child decomposition contains at most $w$ sets, the number of child decompositions for a fixed $S$ is at most $(|\mathcal L|+1)^w$, which is polynomial for fixed $w$ because $\mathcal L$ is laminar and $|\mathcal L|=O(|V|)$.

For the chosen child decomposition, the algorithm enumerates all boundary multisets. A boundary multiset is an edge multiset $X\subseteq E(S)$ such that each edge of $X$ has endpoints in two distinct parts among $S_0,S_1,\ldots,S_p$, and $|X\cap\delta(S_i)|\le h$ for every $i=1,\ldots,p$. Since $p\le w$, we have $|X|\le hw$, so the number of possible boundary multisets is polynomial in $|V|$ when $h$ and $w$ are fixed.

For each boundary multiset $X$ and each $i=0,\ldots,p$, set $I_i:=(I_S\cup V(X))\cap S_i$ and $Q_i:=I_i\cap(Q_S\triangle\operatorname{odd}(X))$. If there exists $i\in\{0,\ldots,p\}$ such that $|I_i|>B$ or $|Q_i|$ is odd, this choice of $X$ is rejected and the algorithm proceeds to the next boundary multiset. Otherwise, the algorithm enumerates all partitions $\mathcal P_i$ of $I_i$ for which $\Psi_i=(I_i,Q_i,\mathcal P_i)$ is an interface on $S_i$ for every $i=0,\ldots,p$. For $i=1,\ldots,p$, let $F_i:=Z[S_i,\Psi_i]$, which has been computed since $|I_i|\le B$ and $S_i\subsetneq S$. The residual part $S_0$ is handled by applying $\mathcal A_\beta$ to the local PC-$\Psi_0$-TSP instance on $G[S_0]$; let $F_0$ be the returned edge multiset. We compute $F':=X\uplus F_0\uplus F_1\uplus\cdots\uplus F_p$. If $F'$ is feasible for the local PC-$\Psi_S$-TSP instance on $G[S]$, we keep $F'$ as a candidate.

Among all candidates constructed over all enumerated choices for the local PC-$\Psi_S$-TSP instance on $G[S]$, the dynamic program chooses one with minimum local objective value and denotes it by $Z[S,\Psi_S]$. The empty child decomposition always produces a candidate: then $S_0=S$ and $X=\emptyset$, and choosing $\mathcal P_0=\mathcal P_S$ gives $\Psi_0=\Psi_S$, so $\mathcal A_\beta$ is applied to the local PC-$\Psi_S$-TSP instance on $G[S]$. Hence, for every enumerated interface $\Psi_S$ on $S$ with $|I_{\Psi_S}|\le B$, the solution $Z[S,\Psi_S]$ is defined after $S$ is processed. Since $Z[V,\Phi]$ is defined when the algorithm terminates, the dynamic program returns a feasible solution to the input instance.

The complete procedure is summarized in Algorithm~\ref{alg:dp} and illustrated in Figure~\ref{fig:dp_example}.

\begin{algorithm}[!ht]\small
\caption{\textsc{LaminarDP}$(G,\ell,\pi,\Phi,\mathcal L,h,\mathcal A_\beta)$}
\label{alg:dp}
\LinesNumbered
\noindent\textbf{Input:} a PC-$\Phi$-TSP instance with $\Phi=(I,Q,\mathcal P)$, a laminar family $\mathcal L$, a fixed positive integer $h$, and a $\beta$-approximation algorithm $\mathcal A_\beta$ for the PC-$\Phi$-TSP that runs in polynomial time on instances whose number of prescribed vertices is bounded by a fixed constant.\\
\noindent\textbf{Output:} a feasible solution to the input instance.
\BlankLine
\nl $w\gets\operatorname{width}(\mathcal L)$, $B\gets |I|+h(w+1)$\;
\nl \For{each $S\in\mathcal L\cup\{V\}$, in increasing order of cardinality}{
  \For{each interface $\Psi_S=(I_S,Q_S,\mathcal P_S)$ on $S$ with $|I_S|\le B$}{
    $Z[S,\Psi_S]\gets\mathrm{undefined}$\;
    \For{each child decomposition $S_1,\ldots,S_p$ of $S$}{
      $S_0\gets S\setminus\bigcup_{i=1}^p S_i$\;
      \For{each boundary multiset $X$ for this decomposition}{
        \For{$i=0,\ldots,p$}{
          $I_i\gets (I_S\cup V(X))\cap S_i$, $Q_i\gets I_i\cap(Q_S\triangle\operatorname{odd}(X))$\;
        }
        \If{there exists $i\in\{0,\ldots,p\}$ such that $|I_i|>B$ or $|Q_i|$ is odd}{
          reject this choice of $X$ and \textbf{continue} with the next boundary multiset\;
        }
        \For{each choice of partitions $\mathcal P_0,\ldots,\mathcal P_p$ such that $\Psi_i=(I_i,Q_i,\mathcal P_i)$ is an interface on $S_i$ for every $i=0,\ldots,p$}{
          $F_i\gets Z[S_i,\Psi_i]$ for $i=1,\ldots,p$\;
          run $\mathcal A_\beta$ on the corresponding local PC-$\Psi_0$-TSP instance on $G[S_0]$ and let $F_0$ be the returned edge multiset (when $S_0=\emptyset$, this means $F_0=\emptyset$)\;
          $F'\gets X\uplus F_0\uplus F_1\uplus\cdots\uplus F_p$\;
          \If{$F'$ is feasible for the local PC-$\Psi_S$-TSP instance on $G[S]$, and either $Z[S,\Psi_S]$ is undefined or $F'$ has smaller local objective value than $Z[S,\Psi_S]$}{
            $Z[S,\Psi_S]\gets F'$\;
          }
        }
      }
    }
  }
}
\nl \Return $Z[V,\Phi]$\;
\end{algorithm}

For fixed $k,h$, Algorithm~\ref{alg:dp} runs in polynomial time on instances with $|I|\le k$ when applied to either $\mathcal L_T$ or $\mathcal L_{\mathcal D}$, assuming that $\mathcal A_\beta$ runs in polynomial time on instances whose number of prescribed vertices is bounded by a fixed constant. Indeed, Lemma~\ref{lem:basic-width} gives $w=\operatorname{width}(\mathcal L)\le |I|\le k$ in these applications, and hence $B=|I|+h(w+1)$ is bounded by a fixed constant. For fixed $B,h,w$, the numbers of local interfaces, child decompositions, boundary multisets, and partition choices are polynomial in $|V|$, so Algorithm~\ref{alg:dp} considers polynomially many $F'$. Each call to $\mathcal A_\beta$ is made on a local PC-$\Psi$-TSP instance with $|I_\Psi|\le B$, so runs in polynomial time.

Fix an optimum solution $F^*$. For $S\subseteq V$, define the interface $\Psi_S^*$ induced by $F^*$ on $S$. Let $I_S^*:=(I\cup V(F^*\cap\delta(S)))\cap S$ and $Q_S^*:=I_S^*\cap\operatorname{odd}(F^*[S])$. Then $Q_S^*$ has even cardinality. If $S=V$, set $\mathcal P_S^*:=\mathcal P$; otherwise, let $\mathcal P_S^*:=\{D\cap I_S^*: D \text{ is a connected component of } (I_S^*\cup V(F^*[S]),F^*[S]) \text{ and } D\cap I_S^*\ne\emptyset\}$. Write $\Psi_S^*:=(I_S^*,Q_S^*,\mathcal P_S^*)$. In particular, $\Psi_V^*=\Phi$. By construction, $F^*[S]$ is feasible for the local PC-$\Psi_S^*$-TSP instance on $G[S]$. Define $\mathcal L(F^*,h):=\{L\in\mathcal L:1\le |F^*\cap\delta(L)|\le h\}$
and $F^*(\mathcal L,h):=F^*\cap\bigcup_{L\in\mathcal L(F^*,h)}\delta(L)$.
Thus $\mathcal L(F^*,h)$ consists of the sets in $\mathcal L$ crossed by $F^*$ between one and $h$ times, and $F^*(\mathcal L,h)$ consists of the edges of $F^*$ crossing at least one such set. The edges in $F^*(\mathcal L,h)$ are the edges Algorithm~\ref{alg:dp} tries to guess. As $h$ increases, $F^*(\mathcal L,h)$ may contain more edges of $F^*$. Figure~\ref{fig:laminar_example} illustrates $\mathcal L(F^*,h)$ and $F^*(\mathcal L,h)$.

Consider the run of Algorithm~\ref{alg:dp} on input $(G,\ell,\pi,\Phi,\mathcal L,h;\mathcal A_\beta)$, where $\mathcal A_\beta$ is as specified above. We use the following lemmas.

\begin{lemma}
\label{lem:dp-guarantee}
The output $Z[V,\Phi]$ satisfies $\ell(Z[V,\Phi])+\pi_V^\Phi(Z[V,\Phi])
    \le
    \beta\,\mathrm{OPT}-(\beta-1)\ell\bigl(F^*(\mathcal L,h)\bigr)$.
\end{lemma}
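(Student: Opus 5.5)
We prove, by induction over $\mathcal L\cup\{V\}$ in increasing order of cardinality, a local version of the bound. For $S\in\mathcal L\cup\{V\}$, let $\mathrm{OPT}_S:=\ell(F^*[S])+\pi_S^{\Psi_S^*}(F^*[S])$ be the local value of $F^*[S]$, and let $F^*_S(\mathcal L,h)$ be the edges of $F^*[S]$ that cross some set $L\in\mathcal L(F^*,h)$ with $L\subsetneq S$. The claim is that, whenever $|I_S^*|\le B$,
\[
\ell(Z[S,\Psi_S^*])+\pi_S^{\Psi_S^*}(Z[S,\Psi_S^*])\le \beta\,\mathrm{OPT}_S-(\beta-1)\,\ell\bigl(F^*_S(\mathcal L,h)\bigr).
\]
For $S=V$ we have $\Psi_V^*=\Phi$ and $|I|\le B$. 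Every $L\in\mathcal L(F^*,h)$ is a proper subset of $V$, since $\delta(V)=\emptyset$. Hence this claim yields the lemma.

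For the inductive step at $S$, choose the decomposition guided by $F^*$. Let $S_1,\dots,S_p$ be the inclusion-maximal sets $L\in\mathcal L(F^*,h)$ with $L\subsetneq S$; they are pairwise disjoint by laminarity, so $p\le w$. Take $X:=F^*[S]\cap\bigcup_i\delta(S_i)$. Then $|X\cap\delta(S_i)|\le h$, and each edge of $X$ joins two distinct parts. Take the partitions $\mathcal P_i$ induced by $F^*$, i.e.\ $\Psi_i=\Psi^*_{S_i}$ for $i\ge1$, and $\Psi_0$ the interface induced by $F^*[S_0]$ on $S_0$, with $I_0=(I_S^*\cup V(X))\cap S_0$. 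We must check that this choice is enumerated and not rejected. For the size bound, note that $I^*_{S_i}$ consists of vertices of $I$ plus endpoints of $F^*\cap\delta(S_i)$, so $|I_i|\le |I|+h\le B$. The residual set $I_0$ gets at most $|I|$ vertices of $I$, at most $h$ endpoints from each $\delta(S_i)$, and at most $h$ from $\delta(S)$ when $S\ne V$, so $|I_0|\le|I|+h(w+1)=B$. This accounting justifies the choice of $B$. The parities $Q_i=I_i\cap(Q_S\triangle\operatorname{odd}(X))$ agree with $\operatorname{odd}(F^*[S_i])\cap I_i$ because $F^*[S]=X\uplus\biguplus_i F^*[S_i]$. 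Consequently each $F^*[S_i]$ is locally feasible for $\Psi_i$, and the induction hypothesis applies to $S_1,\dots,S_p$.

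Next we show that $F'=X\uplus F_0\uplus\cdots\uplus F_p$ is feasible for $\Psi_S^*$; this is the main obstacle. Parity follows from the $Q_i$ computation. For connectivity, replacing $F^*[S_i]$ by any $F_i$ feasible for $\Psi_i$ preserves which vertices of $I_i$ are connected inside $S_i$. Therefore the connectivity pattern on $I_S^*\cup V(X)$ equals that of $F^*[S]$, and this gives $\mathcal P^*_S$-connectivity. $I_S^*$-connectivity holds because every component of $F'$ contains a vertex of some $I_i$; such a vertex lies in $I_S^*$ or is an endpoint of $X$, and in the latter case its component in $F^*[S]$ meets $I_S^*$. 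I would formalize this as a generic gluing lemma, using that components are determined by the partitions.

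The cost bound then follows. Because the penalized sets of the parts are disjoint and every vertex in $V(X)$ is prescribed in its part,
\[
\mathrm{cost}(F')\le \ell(X)+\beta\,\mathrm{OPT}_0+\sum_{i\ge1}\bigl(\beta\,\mathrm{OPT}_{S_i}-(\beta-1)\ell(F^*_{S_i}(\mathcal L,h))\bigr),
\]
where $\mathrm{OPT}_0$ is the local value of $F^*[S_0]$. We also have $\mathrm{OPT}_S=\ell(X)+\mathrm{OPT}_0+\sum_{i\ge1}\mathrm{OPT}_{S_i}$, since penalties of vertices in $V(X)$ count in neither expression. Substituting gives $\beta\,\mathrm{OPT}_S-(\beta-1)\bigl(\ell(X)+\sum_i\ell(F^*_{S_i}(\mathcal L,h))\bigr)$. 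It remains to see that $X\cup\bigcup_i F^*_{S_i}(\mathcal L,h)$ contains $F^*_S(\mathcal L,h)$. Every set of $\mathcal L(F^*,h)$ properly inside $S$ is contained in some $S_i$. Its crossing edges either lie inside $S_i$, in which case they belong to $F^*_{S_i}$ when the set is proper in $S_i$, or they cross $\delta(S_i)$ and so belong to $X$. Since $Z[S,\Psi_S^*]$ has cost at most that of $F'$, the induction closes.
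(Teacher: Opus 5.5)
Your proof is correct and is essentially the paper's argument. It uses the same decomposition into the inclusion-maximal sets of $\mathcal L(F^*,h)$ strictly inside $S$, the same boundary multiset $X$, the same $F^*$-induced local interfaces, and the same connectivity-preservation gluing; by laminarity, your $F^*_S(\mathcal L,h)$ coincides with the paper's $F^*[S]\cap F^*(\mathcal L,h)$, so your inductive bound is algebraically the paper's.

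Two small repairs are needed. First, state the induction only for $S\in\mathcal L(F^*,h)\cup\{V\}$, as the paper does. Your estimate $|I_0|\le B$ uses $|F^*\cap\delta(S)|\le h$, which fails for other sets of $\mathcal L$ even when $|I_S^*|\le B$, and you only invoke the hypothesis on the $S_i\in\mathcal L(F^*,h)$ and on $V$. Second, gluing makes the connectivity on $I_S^*\cup V(X)$ at least as coarse as in $F^*[S]$, not equal to it. That one-directional statement is all you actually use.
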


\begin{proof}
If $I=\emptyset$, Algorithm~\ref{alg:dp} returns the empty edge multiset, and the claimed bound is immediate.

Consider the case $I\ne\emptyset$. We prove by induction over the sets in $\mathcal L(F^*,h)\cup\{V\}$, in increasing order of cardinality, that after a set $L$ has been processed,
\[
    \ell(Z[L,\Psi_L^*])+\pi_L^{\Psi_L^*}(Z[L,\Psi_L^*])
    \le
    \beta\bigl(\ell(F^*[L]\setminus F^*(\mathcal L,h))
        +\pi_L^{\Psi_L^*}(F^*[L])\bigr)
    +\ell(F^*[L]\cap F^*(\mathcal L,h)).
\]
For every $L\in\mathcal L(F^*,h)\cup\{V\}$, we have $|I_L^*|\le |I|+h\le B$, so $\Psi_L^*$ is among the interfaces enumerated for $L$.

First consider a set $L\in\mathcal L(F^*,h)\cup\{V\}$ that contains no set $L'\in\mathcal L(F^*,h)$ with $L'\subsetneq L$. For the empty child decomposition, the algorithm applies $\mathcal A_\beta$ to the local PC-$\Psi_L^*$-TSP instance on $G[L]$. Since $F^*[L]$ is feasible for this local instance, this produces a candidate of local objective value at most $\beta\bigl(\ell(F^*[L])+\pi_L^{\Psi_L^*}(F^*[L])\bigr)$. Hence $Z[L,\Psi_L^*]$ has local objective value at most this quantity. Moreover, by laminarity and the choice of $L$, we have $F^*[L]\cap F^*(\mathcal L,h)=\emptyset$. Thus the claimed bound holds for $L$.

Now fix a set $L\in\mathcal L(F^*,h)\cup\{V\}$ that contains at least one set $L'\in\mathcal L(F^*,h)$ with $L'\subsetneq L$, and assume that the claim has already been proved for all smaller sets in this family. Let $L_1,\ldots,L_q$ be the inclusionwise maximal sets among all sets $S\in\mathcal L(F^*,h)$ with $S\subsetneq L$. Since $\mathcal L(F^*,h)$ is laminar, the sets $L_1,\ldots,L_q$ are pairwise disjoint, and hence form one of the child decompositions considered by Algorithm~\ref{alg:dp}. Put $L_0:=L\setminus\bigcup_{i=1}^q L_i$ and $X^L:=F^*[L]\setminus\bigl(F^*[L_0]\uplus F^*[L_1]\uplus\cdots\uplus F^*[L_q]\bigr)$. Each edge of $X^L$ has endpoints in different parts and crosses at least one of the sets $L_1,\ldots,L_q$, so $X^L\subseteq F^*[L]\cap F^*(\mathcal L,h)$. Moreover, $|X^L\cap\delta(L_i)|\le |F^*\cap\delta(L_i)|\le h$ for every $i=1,\ldots,q$, and hence $|X^L|\le hq$. Thus $X^L$ is one of the boundary multisets considered for this child decomposition.

Fix this child decomposition $L_1,\ldots,L_q$ and the boundary multiset $X^L$. For each $i=0,\ldots,q$, the set of prescribed vertices produced by the algorithm is $(I_L^*\cup V(X^L))\cap L_i=I_{L_i}^*$, and the vertices prescribed to have odd degree are $I_{L_i}^*\cap(Q_L^*\triangle\operatorname{odd}(X^L))=Q_{L_i}^*$. Moreover, since $|I_L^*|\le |I|+h$, $|X^L|\le hq$, $q\le\operatorname{width}(\mathcal L)$, and each edge of $X^L$ has at most one endpoint in $L_i$, we have
$|I_{L_i}^*|=|(I_L^*\cup V(X^L))\cap L_i|\le |I_L^*|+|X^L|\le |I|+h+hq\le B$.
Hence the boundary multiset $X^L$ is not rejected. When the algorithm enumerates partitions of the prescribed vertex sets $I_{L_i}^*$, it includes $\mathcal P_{L_i}^*$ for every $i=0,\ldots,q$. Consequently, the child decomposition $L_1,\ldots,L_q$, the boundary multiset $X^L$, and the local interfaces $\Psi_{L_i}^*=(I_{L_i}^*,Q_{L_i}^*,\mathcal P_{L_i}^*)$ on $G[L_i]$ induced by $F^*$ are all considered when $Z[L,\Psi_L^*]$ is computed.

Using these local interfaces, let $\widehat F_0$ be the solution returned by $\mathcal A_\beta$ for the local PC-$\Psi_{L_0}^*$-TSP instance on $G[L_0]$, and put $\widehat F_i:=Z[L_i,\Psi_{L_i}^*]$ for $i=1,\ldots,q$. Algorithm~\ref{alg:dp} forms $\widehat F_L:=X^L\uplus \widehat F_0\uplus \widehat F_1\uplus\cdots\uplus \widehat F_q$. It is straightforward to verify that $\widehat F_L$ is feasible for the local PC-$\Psi_L^*$-TSP instance on $G[L]$; see Appendix~\ref{app:deferred-proofs} for details. Hence $Z[L,\Psi_L^*]$ has local objective value at most that of $\widehat F_L$.

We now bound the local objective value of $\widehat F_L$. Since $L_1,\ldots,L_q$ are the inclusionwise maximal sets among all $S\in\mathcal L(F^*,h)$ with $S\subsetneq L$, every set $S\in\mathcal L(F^*,h)$ with $S\subsetneq L$ is contained in some $L_i$. Hence $F^*[L_0]\cap F^*(\mathcal L,h)=\emptyset$. Together with $X^L\subseteq F^*[L]\cap F^*(\mathcal L,h)$, the definition of $X^L$ gives $\ell(F^*[L]\cap F^*(\mathcal L,h))=\ell(X^L)+\sum_{i=1}^q\ell(F^*[L_i]\cap F^*(\mathcal L,h))$. Since $F^*[L_0]$ is feasible for the local PC-$\Psi_{L_0}^*$-TSP instance on $G[L_0]$, the approximation guarantee of $\mathcal A_\beta$ gives
\[
    \ell(\widehat F_0)+\pi_{L_0}^{\Psi_{L_0}^*}(\widehat F_0)
    \le
    \beta\bigl(\ell(F^*[L_0])+\pi_{L_0}^{\Psi_{L_0}^*}(F^*[L_0])\bigr)
    =
    \beta\bigl(\ell(F^*[L_0]\setminus F^*(\mathcal L,h))+\pi_{L_0}^{\Psi_{L_0}^*}(F^*[L_0])\bigr).
\]
Applying the induction hypothesis to $L_1,\ldots,L_q$ gives
\[
\sum_{i=1}^q
\bigl(\ell(\widehat F_i)+\pi_{L_i}^{\Psi_{L_i}^*}(\widehat F_i)\bigr)
\le
\sum_{i=1}^q
\left[
\beta\bigl(\ell(F^*[L_i]\setminus F^*(\mathcal L,h))
+\pi_{L_i}^{\Psi_{L_i}^*}(F^*[L_i])\bigr)
+\ell(F^*[L_i]\cap F^*(\mathcal L,h))
\right].
\]
By the definitions of $\Psi_L^*$ and $\Psi_{L_i}^*$, and since $L_0,L_1,\ldots,L_q$ form a partition of $L$, we have $\sum_{i=0}^q\pi_{L_i}^{\Psi_{L_i}^*}(F^*[L_i])=\pi_L^{\Psi_L^*}(F^*[L])$. Also, the edge multisets $F^*[L_0],F^*[L_1],\ldots,F^*[L_q],X^L$ form a partition of $F^*[L]$. Combining these facts with the definition of $\widehat F_L$, we obtain
\[
\begin{aligned}
    \ell(\widehat F_L)+\pi_L^{\Psi_L^*}(\widehat F_L)
    &=
    \ell(X^L)+
    \sum_{i=0}^q
    \bigl(\ell(\widehat F_i)+\pi_{L_i}^{\Psi_{L_i}^*}(\widehat F_i)\bigr)\\
    &\le
    \sum_{i=0}^q
    \beta\bigl(\ell(F^*[L_i]\setminus F^*(\mathcal L,h))
    +\pi_{L_i}^{\Psi_{L_i}^*}(F^*[L_i])\bigr)
    +\ell(F^*[L]\cap F^*(\mathcal L,h))\\
    &\le
    \beta\bigl(\ell(F^*[L]\setminus F^*(\mathcal L,h))
        +\pi_L^{\Psi_L^*}(F^*[L])\bigr)
    +\ell(F^*[L]\cap F^*(\mathcal L,h)).
\end{aligned}
\]
The same bound holds for $Z[L,\Psi_L^*]$. This completes the induction.

Taking $L=V$, and using $\Psi_V^*=\Phi$, $F^*[V]=F^*$, gives
\[
\begin{aligned}
    \ell(Z[V,\Phi])+\pi_V^\Phi(Z[V,\Phi])
    &\le
    \beta\bigl(\ell(F^*)-\ell(F^*(\mathcal L,h))+\pi_V^\Phi(F^*)\bigr)
    +\ell(F^*(\mathcal L,h))\\
    &=
    \beta\,\mathrm{OPT}-(\beta-1)\ell(F^*(\mathcal L,h)).
\end{aligned}\qedhere
\]
\end{proof}

\begin{lemma}
\label{lem:dual-charging}
Suppose that there is a family $\mathcal S\subseteq 2^V$ and a feasible solution $y$ to $(\mathcal S\text{-}\mathrm{D})$ such that $\mathcal L=\{S\in\mathcal S:y_S>0\}$, and that $F^*$ crosses every set in $\mathcal L$. Then the output $Z[V,\Phi]$ of Algorithm~\ref{alg:dp} satisfies
\[
    \ell(Z[V,\Phi])+\pi_V^\Phi(Z[V,\Phi])
    \le
    \beta\,\mathrm{OPT}
    -(\beta-1)\left(\sum_{S\in\mathcal L}y_S-\frac{\mathrm{OPT}}{h+1}\right).
\]
\end{lemma}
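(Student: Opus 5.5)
Since the output bound of Lemma~\ref{lem:dp-guarantee} is decreasing in $\ell(F^*(\mathcal L,h))$, the plan is to reduce the claim to the lower bound
\[
\ell\bigl(F^*(\mathcal L,h)\bigr)\ \ge\ \sum_{S\in\mathcal L}y_S-\frac{\mathrm{OPT}}{h+1}.
\]
Substituting this into Lemma~\ref{lem:dp-guarantee} and using $\beta\ge1$ gives the statement directly. So everything reduces to a dual-charging estimate.

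To prove the estimate, split $\mathcal L$ into two parts. The first part is $\mathcal L(F^*,h)$, the sets crossed by $F^*$ between $1$ and $h$ times. The second part is $\mathcal L_{>h}:=\{L\in\mathcal L: |F^*\cap\delta(L)|\ge h+1\}$. Because $F^*$ crosses every set in $\mathcal L$, these two parts cover all of $\mathcal L$. For the first part, every edge crossing such an $L$ lies in $F^*(\mathcal L,h)$. Hence dual feasibility gives
\[
\sum_{L\in\mathcal L(F^*,h)}y_L\ \le\ \sum_{L\in\mathcal L(F^*,h)}y_L\,|F^*(\mathcal L,h)\cap\delta(L)|\ \le\ \sum_{e\in F^*(\mathcal L,h)}\ \sum_{S\in\mathcal S:e\in\delta(S)}y_S\ \le\ \ell\bigl(F^*(\mathcal L,h)\bigr).
\]
The first inequality uses that each such $L$ is crossed at least once by $F^*$, and every crossing edge of $F^*$ lies in $F^*(\mathcal L,h)$. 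Edges are counted with multiplicity throughout.

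For the second part, each set is crossed at least $h+1$ times, so
\[
(h+1)\sum_{L\in\mathcal L_{>h}}y_L\ \le\ \sum_{L\in\mathcal L_{>h}}y_L\,|F^*\cap\delta(L)|\ \le\ \sum_{e\in F^*}\ \sum_{S:e\in\delta(S)}y_S\ \le\ \ell(F^*)\ \le\ \mathrm{OPT}.
\]
Adding the two inequalities yields $\sum_{S\in\mathcal L}y_S\le \ell(F^*(\mathcal L,h))+\mathrm{OPT}/(h+1)$, which is the desired estimate.

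The only delicate point is the interchange of summation in the two displays. It relies on $y\ge0$ and on the dual constraint $\sum_{S\ni e\text{ crossing}}y_S\le \ell(e)$ applied to each copy of an edge in the multiset. One also has to notice that the sets of $\mathcal L$ form a subfamily of $\mathcal S$, so dropping the terms with $S\notin\mathcal L$ only decreases the left-hand sums. I do not expect a real obstacle beyond this bookkeeping. The $I=\emptyset$ case needs no separate treatment: then $F^*=\emptyset$, and the crossing hypothesis forces $\mathcal L=\emptyset$, so the bound follows from Lemma~\ref{lem:dp-guarantee} trivially.
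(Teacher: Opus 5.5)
Your proposal is correct and follows the paper's proof essentially step for step. You use the same split of $\mathcal L$ into sets crossed between $1$ and $h$ times and sets crossed at least $h+1$ times, the same two dual-feasibility charging inequalities giving $\ell(F^*(\mathcal L,h))\ge\sum_{S\in\mathcal L}y_S-\mathrm{OPT}/(h+1)$, and the same substitution into Lemma~\ref{lem:dp-guarantee}; your explicit note that this substitution needs $\beta\ge1$ makes precise a point the paper leaves implicit.
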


\begin{proof}
Let $\mathcal L_{\le h}:=\{S\in\mathcal L:1\le |F^*\cap\delta(S)|\le h\}$ and $\mathcal L_{>h}:=\mathcal L\setminus\mathcal L_{\le h}$. Since $y$ is feasible for $(\mathcal S\text{-}\mathrm{D})$, we have
$\sum_{S\in\mathcal S:e\in\delta(S)}y_S\le \ell(e)$ for every edge $e$. Also, every set in $\mathcal L_{>h}$ is crossed by $F^*$ at least $h+1$ times, while every set in $\mathcal L_{\le h}$ is crossed by $F^*(\mathcal L,h)$ at least once. Hence
\[
    (h+1)\sum_{S\in\mathcal L_{>h}}y_S
    \le \sum_{S\in\mathcal L_{>h}} |F^*\cap\delta(S)|\,y_S = \sum_{e\in F^*}\sum_{\substack{S\in\mathcal L_{>h}\\ e\in\delta(S)}}y_S \le \sum_{e\in F^*}\sum_{\substack{S\in\mathcal S\\ e\in\delta(S)}}y_S \le \ell(F^*),
\]
\[
    \ell(F^*(\mathcal L,h))
    \ge \sum_{e\in F^*(\mathcal L,h)}\sum_{\substack{S\in\mathcal S\\ e\in\delta(S)}}y_S \ge \sum_{e\in F^*(\mathcal L,h)}\sum_{\substack{S\in\mathcal L_{\le h}\\ e\in\delta(S)}}y_S = \sum_{S\in\mathcal L_{\le h}} |F^*(\mathcal L,h)\cap\delta(S)|\,y_S \ge \sum_{S\in\mathcal L_{\le h}}y_S.
\]
Therefore
$\ell(F^*(\mathcal L,h))\ge \sum_{S\in\mathcal L}y_S-\ell(F^*)/(h+1)$. Applying Lemma~\ref{lem:dp-guarantee}, we obtain
\[
\begin{aligned}
    \ell(Z[V,\Phi])+\pi_V^\Phi(Z[V,\Phi])
    \le \beta\,\mathrm{OPT}
    -(\beta-1)\left(\sum_{S\in\mathcal L}y_S-\frac{1}{h+1}\mathrm{OPT}\right).
\end{aligned}\qedhere
\]
\end{proof}

For $\mathcal L_{\mathcal D}$, every set $S\in\mathcal L_{\mathcal D}$ separates a demand pair whose endpoints are required by $\mathcal P$ to be connected. Hence every feasible solution to the PC-$\Phi$-TSP instance crosses every set in $\mathcal L_{\mathcal D}$, so Lemma~\ref{lem:dual-charging} applies with $\mathcal S=\mathcal S_{\mathcal D}$ and $y=y^{\mathcal D}$. For $\mathcal L_T$, this crossing condition need not hold: $T=\operatorname{odd}(H)\triangle Q$, where $H$ is constructed in Algorithm~\ref{alg:basic}, is a subset of $I$ but may differ from $Q$. Thus Lemma~\ref{lem:dual-charging} may not apply to $\mathcal L_T$. The next section handles this case by the recursive improvement algorithm.

\begin{figure}[htpb]\small 
\centering
\includegraphics[height=0.16\textheight,keepaspectratio]{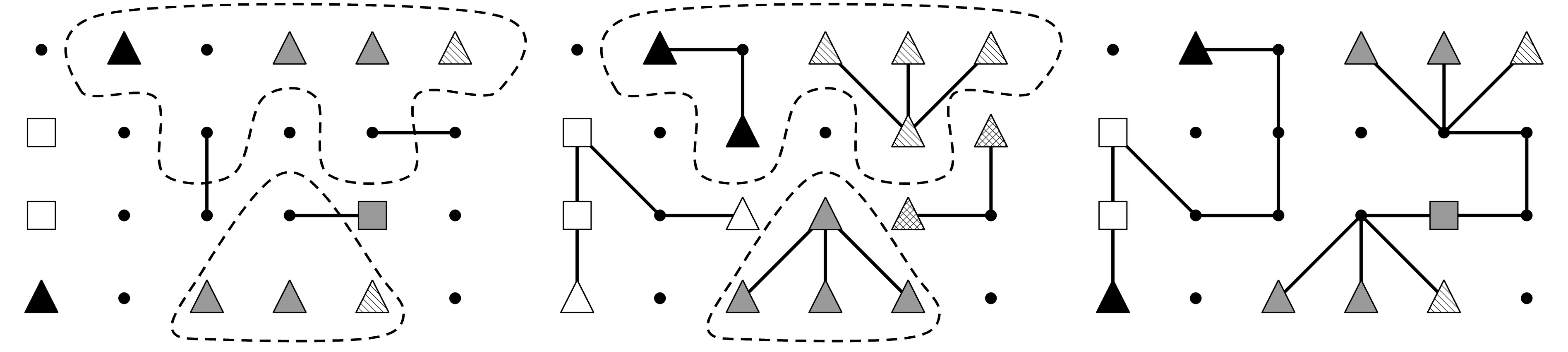}
\caption{An example of the dynamic program. From left to right: a chosen child decomposition with a boundary multiset; chosen local interfaces and edge multisets computed inside the parts of the child decomposition; and the multiunion of these edge multisets with the boundary multiset, forming a candidate.}
\label{fig:dp_example}
\end{figure}

\begin{figure}[htpb]\small 
\centering
\includegraphics[height=0.3\textwidth,keepaspectratio]{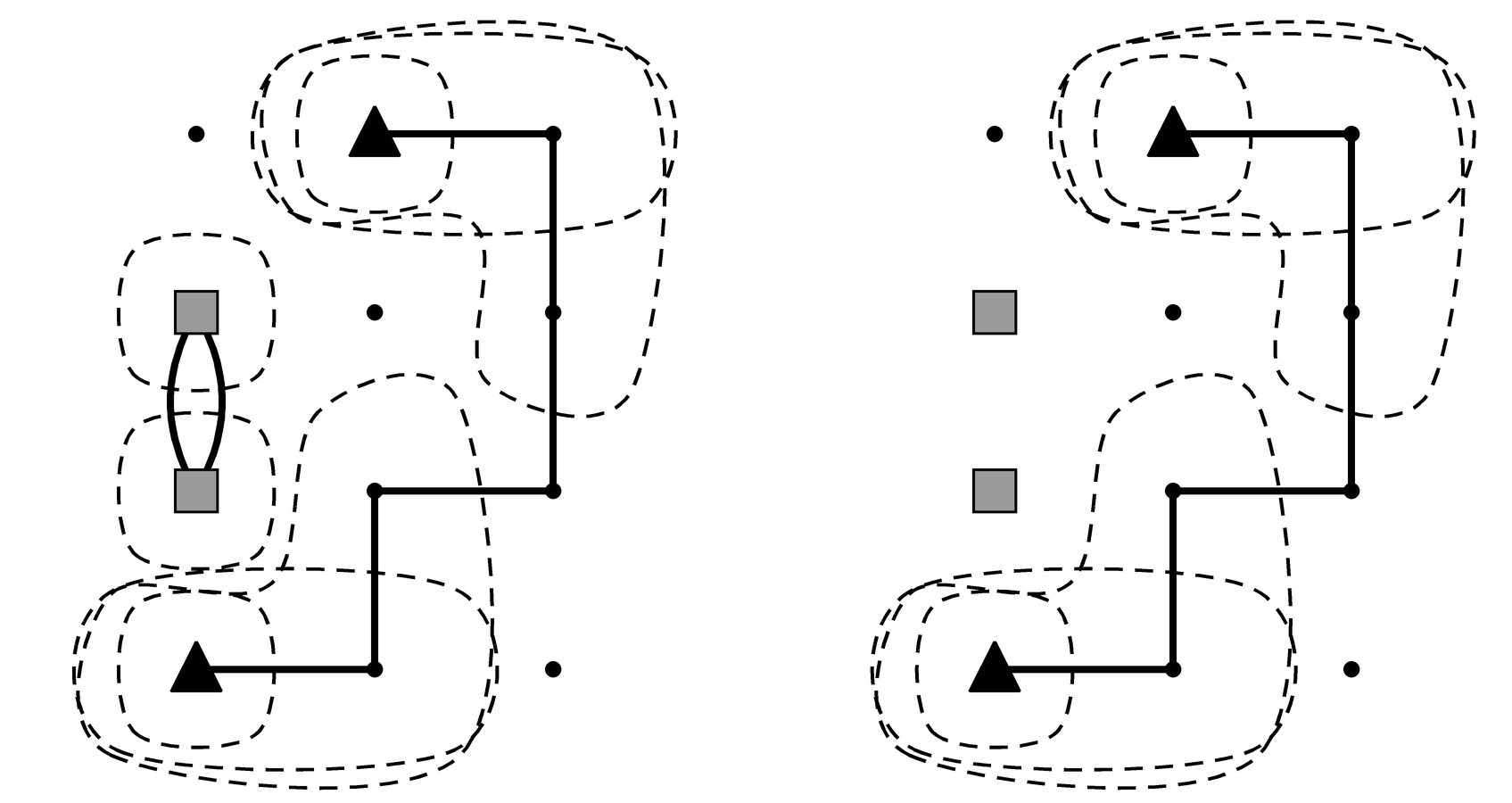}
\caption{An illustration of $\mathcal L(F^*,h)$ and $F^*(\mathcal L,h)$. The left panel shows an optimum solution $F^*$ and the sets of a laminar family $\mathcal L$. The right panel shows the sets in $\mathcal L(F^*,h)$ and the edge multiset $F^*(\mathcal L,h)$ for $h=1$.}
\label{fig:laminar_example}
\end{figure}

\section{The recursive improvement algorithm}
\label{sec:recursive}

We now describe the recursive improvement algorithm. Given a PC-$\Phi$-TSP instance, an input approximation algorithm $\mathcal A_\beta$ for the PC-$\Phi$-TSP, and an error parameter $\eta$, set $h=\lfloor10/\eta\rfloor$ for the calls to Algorithm~\ref{alg:dp}. The algorithm first computes three nonrecursive candidates: the output of Algorithm~\ref{alg:basic} and the two outputs of Algorithm~\ref{alg:dp} on the laminar families $\mathcal L_{\mathcal D}$ and $\mathcal L_T$, respectively, with $\mathcal A_\beta$ used as a subroutine. It then tries to split the instance into two smaller instances. For each such split, it solves the two smaller instances recursively and adds the multiunion of the returned edge multisets as another candidate. The algorithm returns the best candidate found. The complete procedure is summarized in Algorithm~\ref{alg:recursive}.

It remains to specify the splits considered by the algorithm. Consider the current instance in a recursive call, with graph $G=(V,E)$ and interface $\Phi=(I,Q,\mathcal P)$. For $R\subseteq V$, let $\bar R:=V\setminus R$. We call $R$ eligible for $\Phi$ if $1\le |I\cap R|\le |I|-1$, $|Q\cap R|$ is even, and each $C\in\mathcal P$ is contained in either $R$ or $\bar R$. Figure~\ref{fig:eligible_split} illustrates some eligible sets. For an eligible $R$, define $\Phi^R:=\bigl(I\cap R,\ Q\cap R,\ \{C\in\mathcal P:C\subseteq R\}\bigr)$ and $\Phi^{\bar R}:=\bigl(I\cap\bar R,\ Q\cap\bar R,\ \{C\in\mathcal P:C\subseteq\bar R\}\bigr)$. We call $\Phi^R$ and $\Phi^{\bar R}$ the restricted interfaces induced by $R$. The corresponding restricted instances are defined on $G[R]$ and $G[\bar R]$ with interfaces $\Phi^R$ and $\Phi^{\bar R}$, respectively, and with edge lengths and vertex penalties inherited from the current instance. We call this split the eligible split induced by $R$. By eligibility, $1\le |I\cap R|\le |I|-1$ and $1\le |I\cap\bar R|\le |I|-1$, so both restricted instances have fewer prescribed vertices than the current instance. Moreover, the multiunion of feasible solutions to the two restricted instances is feasible for the current instance.

\begin{algorithm}[!ht]
\caption{\textsc{RecursiveImprove}$(G,\ell,\pi,\Phi,\eta,\mathcal A_\beta)$}
\label{alg:recursive}
\LinesNumbered
\noindent\textbf{Input:} a PC-$\Phi$-TSP instance with $\Phi=(I,Q,\mathcal P)$, a $\beta$-approximation algorithm $\mathcal A_\beta$ for the PC-$\Phi$-TSP that runs in polynomial time on instances whose number of prescribed vertices is bounded by a fixed constant, and an error parameter $0<\eta\le1$.\\
\noindent\textbf{Output:} a feasible solution to the input instance.
\BlankLine
\nl $h\gets \lfloor10/\eta\rfloor$\;
\nl $(F_{\mathrm{bas}},\mathcal L_T,\mathcal L_{\mathcal D})\gets
\textsc{BasicApprox}(G,\ell,\pi,\Phi)$\;
\nl $F_{\mathcal D}\gets\textsc{LaminarDP}(G,\ell,\pi,\Phi,\mathcal L_{\mathcal D},h,\mathcal A_\beta)$\;
\nl $F_T\gets\textsc{LaminarDP}(G,\ell,\pi,\Phi,\mathcal L_T,h,\mathcal A_\beta)$\;
\nl $\mathcal F\gets\{F_{\mathrm{bas}},F_{\mathcal D},F_T\}$\;
\nl \For{each $R\in\mathcal L_T$ that is eligible for $\Phi$}{
    Form the restricted interfaces $\Phi^R$ and $\Phi^{\bar R}$\;
    $F^R\gets\textsc{RecursiveImprove}(G[R],\ell,\pi,\Phi^R,\eta,\mathcal A_\beta)$\;
    $F^{\bar R}\gets\textsc{RecursiveImprove}(G[\bar R],\ell,\pi,\Phi^{\bar R},\eta,\mathcal A_\beta)$\;
    $\mathcal F\gets \mathcal F\cup\{F^R\uplus F^{\bar R}\}$\;
  }
\nl \Return a member $F\in\mathcal F$ minimizing $\ell(F)+\pi_V^\Phi(F)$\;
\end{algorithm} 

For fixed $k,\eta$, Algorithm~\ref{alg:recursive} runs in polynomial time on instances with $|I|\le k$, assuming that $\mathcal A_\beta$ runs in polynomial time on instances whose number of prescribed vertices is bounded by a fixed constant. Consider one recursive call. On its current instance, there are at most $k$ prescribed vertices, and Algorithm~\ref{alg:basic} runs in polynomial time. By Lemma~\ref{lem:basic-width}, both families have width at most $k$. Hence each call to Algorithm~\ref{alg:dp} invokes $\mathcal A_\beta$ only on local instances with at most $k+h(k+1)$ prescribed vertices, where $h=\lfloor 10/\eta\rfloor$ and $k+h(k+1)$ is a constant. The running-time analysis of Algorithm~\ref{alg:dp} shows that the two calls to Algorithm~\ref{alg:dp} run in polynomial time. The number of recursive calls is also polynomially bounded. Let $n$ be the number of vertices in the input instance. By laminarity, the family $\mathcal L_T$ computed in any current instance contains $O(n)$ sets, so each call considers $O(n)$ eligible sets and creates at most $O(n)$ further recursive calls. By eligibility, every recursive call strictly decreases the number of prescribed vertices, so the recursion depth is at most $k$. Hence the total number of recursive calls is at most $n^{O(k)}$, which is polynomial in $n$ for fixed $k$.

\begin{lemma}
\label{lem:one-step}
Fix the error parameter $0<\eta\le1$. Suppose that $\mathcal A_\beta$ is a $\beta$-approximation algorithm for the PC-$\Phi$-TSP as specified above. Then, on any input instance, Algorithm~\ref{alg:recursive} returns a solution of objective value at most $\beta'\mathrm{OPT}$, where $\beta':=\max\{\rho+\eta,\beta-\frac{\eta}{10}(\beta-1)\}$.
\end{lemma}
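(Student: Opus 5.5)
The plan is to prove the bound by strong induction on $|I|$, for a fixed optimum solution $F^*$. The induction hypothesis is applied to the recursive calls. This is legitimate because they use the same $\mathcal A_\beta$ and $\eta$, and hence carry the same target ratio $\beta'$.

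\textbf{Trivial cases.} If $I=\emptyset$, the basic candidate is the empty multiset, which is optimal. If $|I|=1$, then $T\subseteq I$ has even cardinality, so $T=\emptyset$ and $\mathcal L_T=\emptyset$. This means $F^*$ vacuously crosses every set of $\mathcal L_T$, and we fall into Case~1 below.

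\textbf{Case split.} Either (Case~1) $F^*$ crosses every set of $\mathcal L_T$, or (Case~2) some $R\in\mathcal L_T$ is not crossed by $F^*$.

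\textbf{Case~2.} I first check that $R$ is eligible for $\Phi$.
\begin{itemize}
\item Since $R\in\mathcal S_T$, the number $|R\cap T|$ is odd. Because $|T|$ is even, $|\bar R\cap T|$ is odd too. As $T\subseteq I$, both $R$ and $\bar R$ meet $I$, so $1\le|I\cap R|\le|I|-1$.
\item Since $F^*\cap\delta(R)=\emptyset$, the odd-degree vertices of $F^*$ lying in $R$ are exactly those of $F^*[R]$. Vertices of $I$ not incident to $F^*$ have degree $0$. Hence $Q\cap R=\operatorname{odd}(F^*[R])$, which has even cardinality.
\item Each $C\in\mathcal P$ lies in one component of $F^*$, and that component does not cross $R$. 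So $C\subseteq R$ or $C\subseteq\bar R$.
\end{itemize}
It follows that $F^*[R]$ and $F^*[\bar R]$ are feasible for the restricted instances with interfaces $\Phi^R$ and $\Phi^{\bar R}$. Their objective values add up to $\mathrm{OPT}$, because the penalties split along the partition $R,\bar R$ and no edge of $F^*$ is lost. Therefore $\mathrm{OPT}^R+\mathrm{OPT}^{\bar R}\le\mathrm{OPT}$. By the induction hypothesis (both restricted instances have fewer prescribed vertices), the candidate $F^R\uplus F^{\bar R}$ has value at most $\beta'(\mathrm{OPT}^R+\mathrm{OPT}^{\bar R})\le\beta'\mathrm{OPT}$.

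\textbf{Case~1.} Let $a:=\ell(J)$ and $b:=\sum_{S\in\mathcal L_{\mathcal D}}y^{\mathcal D}_S$. We compare three candidates.
\begin{itemize}
\item By Lemma~\ref{lem:basic-guarantee}, $F_{\mathrm{bas}}$ has value at most $\rho\,\mathrm{OPT}+a+4b$.
\item Every feasible solution crosses each set of $\mathcal L_{\mathcal D}$, so Lemma~\ref{lem:dual-charging} applies with $y^{\mathcal D}$. It gives $F_{\mathcal D}$ value at most $\beta\mathrm{OPT}-(\beta-1)\bigl(b-\mathrm{OPT}/(h+1)\bigr)$.
\item By the Case~1 assumption, Lemma~\ref{lem:dual-charging} also applies to $\mathcal L_T$ with $y^T$, where $\sum_{S\in\mathcal L_T} y^T_S=a$ by Lemma~\ref{lem:tjoin-laminar}. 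This gives $F_T$ value at most $\beta\mathrm{OPT}-(\beta-1)\bigl(a-\mathrm{OPT}/(h+1)\bigr)$.
\end{itemize}
Now split on the size of $a+4b$.
\begin{itemize}
\item If $a+4b\le\eta\,\mathrm{OPT}$, the basic candidate has value at most $(\rho+\eta)\mathrm{OPT}$.
\item Otherwise $\max\{a,b\}>\eta\,\mathrm{OPT}/5$. Since $h=\lfloor10/\eta\rfloor$, we have $h+1>10/\eta$, so $\mathrm{OPT}/(h+1)<\eta\,\mathrm{OPT}/10$. The better of the two dynamic-program candidates then has value at most
\[
\beta\mathrm{OPT}-(\beta-1)\Bigl(\frac{\eta}{5}-\frac{\eta}{10}\Bigr)\mathrm{OPT}=\Bigl(\beta-\frac{\eta}{10}(\beta-1)\Bigr)\mathrm{OPT}.
\]
This step uses $\beta\ge1$.
\end{itemize}
Algorithm~\ref{alg:recursive} returns the best candidate, so in all cases its value is at most $\beta'\mathrm{OPT}$.

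\textbf{Main obstacle.} I expect the delicate step to be the eligibility verification in Case~2, in particular the parity claim $|Q\cap R|$ even. It relies on the parity constraint of $F^*$ including vertices of $I$ that are not incident to $F^*$, which have degree zero. One must also confirm that the induction hypothesis is applicable: the restricted instances are legitimate PC-$\Phi$-TSP instances on $G[R]$ and $G[\bar R]$, whose optima are bounded by the values of $F^*[R]$ and $F^*[\bar R]$.
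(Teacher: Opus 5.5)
Your proof is correct and follows essentially the same route as the paper: induction on $|I|$, a case split on whether a fixed optimum crosses every set of $\mathcal L_T$, the three-candidate analysis via Lemma~\ref{lem:basic-guarantee} and Lemma~\ref{lem:dual-charging} in the crossing case, and the eligibility check followed by the recursive split in the non-crossing case. Your threshold $a+4b\le\eta\,\mathrm{OPT}$, combined with the pigeonhole bound $\max\{a,b\}>\eta\,\mathrm{OPT}/5$, is just a repackaging of the paper's three subcases, and your separate $|I|=1$ remark is harmless but unnecessary, since that case is already covered by the crossing case.
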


\begin{proof}
Fix an input instance and consider the execution of Algorithm~\ref{alg:recursive} on this input. We prove a stronger statement: for every recursive call in this execution, the solution returned by the call has objective value at most $\beta'$ times the optimum value of its current instance. We prove this statement by induction, in increasing order of the number of prescribed vertices in the current instance of the recursive call.

Consider one recursive call of Algorithm~\ref{alg:recursive} on input $(G,\ell,\pi,\Phi,\eta,\mathcal A_\beta)$, where the current instance has interface $\Phi=(I,Q,\mathcal P)$. Let $\mathrm{OPT}_0$ be the optimum value of the current instance. If $I=\emptyset$, then the candidate produced by Algorithm~\ref{alg:basic} is the empty edge multiset, which is optimal, and the desired bound is immediate. This proves the base case of the induction. We may therefore assume that $I\ne\emptyset$ and that the statement holds for all recursive calls in this execution whose current instances have fewer than $|I|$ prescribed vertices.

Run Algorithm~\ref{alg:basic} on the current instance, and use its notation
$F_{\mathrm{bas}}$, $\mathcal L_T$, $\mathcal L_{\mathcal D}$,
$H$, $J$, $K$, $T$, $y^T$, and $y^{\mathcal D}$. Let $F_{\mathcal D}$ and $F_T$ be the two candidates obtained by applying Algorithm~\ref{alg:dp} to the current instance with laminar families $\mathcal L_{\mathcal D}$ and $\mathcal L_T$, respectively, using $\mathcal A_\beta$ as a subroutine.

Consider first the case in which the current instance has an optimum solution $F_0^*$ crossing every set of $\mathcal L_T$. Since every feasible solution crosses every set of $\mathcal L_{\mathcal D}$, the solution $F_0^*$ crosses all sets in both laminar families. Recall that Algorithm~\ref{alg:recursive} uses $h=\lfloor10/\eta\rfloor$, and hence $1/(h+1)<\eta/10$. If $\ell(J)\le(\eta/5)\mathrm{OPT}_0$ and $\sum_{S\in\mathcal L_{\mathcal D}}y^{\mathcal D}_S\le(\eta/5)\mathrm{OPT}_0$, then Lemma~\ref{lem:basic-guarantee} gives
\[
    \ell(F_{\mathrm{bas}})+\pi_V^\Phi(F_{\mathrm{bas}})
    \le \rho\mathrm{OPT}_0+\ell(J)+4\sum_{S\in\mathcal L_{\mathcal D}}y^{\mathcal D}_S
    \le (\rho+\eta)\mathrm{OPT}_0
    \le \beta'\mathrm{OPT}_0.
\]
If $\sum_{S\in\mathcal L_{\mathcal D}}y^{\mathcal D}_S>(\eta/5)\mathrm{OPT}_0$, then Lemma~\ref{lem:dual-charging}, applied to $F_{\mathcal D}$ with $\mathcal S=\mathcal S_{\mathcal D}$, $y=y^{\mathcal D}$, and $F_0^*$, gives
\[
    \ell(F_{\mathcal D})+\pi_V^\Phi(F_{\mathcal D})
    \le \beta\mathrm{OPT}_0-(\beta-1)\left(\sum_{S\in\mathcal L_{\mathcal D}}y^{\mathcal D}_S-\frac{\mathrm{OPT}_0}{h+1}\right)
    \le \left(\beta-\frac{\eta}{10}(\beta-1)\right)\mathrm{OPT}_0
    \le \beta'\mathrm{OPT}_0.
\]
If $\sum_{S\in\mathcal L_{\mathcal D}}y^{\mathcal D}_S\le(\eta/5)\mathrm{OPT}_0$ and $\ell(J)=\sum_{S\in\mathcal L_T} y^T_S>(\eta/5)\mathrm{OPT}_0$, then Lemma~\ref{lem:dual-charging}, applied to $F_T$ with $\mathcal S=\mathcal S_T$, $y=y^T$, and $F_0^*$, gives
\[
    \ell(F_T)+\pi_V^\Phi(F_T)
    \le \beta\mathrm{OPT}_0-(\beta-1)\left(\ell(J)-\frac{\mathrm{OPT}_0}{h+1}\right)
    \le \left(\beta-\frac{\eta}{10}(\beta-1)\right)\mathrm{OPT}_0
    \le \beta'\mathrm{OPT}_0.
\]
Thus, whenever the current instance has an optimum solution crossing every set of $\mathcal L_T$, one of the three nonrecursive candidates has objective value at most $\beta'\mathrm{OPT}_0$.

We are left with the case in which no optimum solution of the current instance crosses every set of $\mathcal L_T$. Fix an optimum solution $F_0^*$, and choose $U\in\mathcal L_T$ such that $F_0^*\cap\delta(U)=\emptyset$. We verify that $U$ is one of the eligible sets considered by Algorithm~\ref{alg:recursive}. Since $U\in\mathcal L_T\subseteq\mathcal S_T$, the set $U$ contains an odd number of vertices of $T$. Since $T\subseteq I$ has even cardinality, $U\cap T$ is nonempty and cannot be all of $T$; hence both $I\cap U$ and $I\cap\bar U$ are nonempty. Since $F_0^*$ has no edge in $\delta(U)$, the handshaking identity on $U$ gives $|Q\cap U|\equiv |F_0^*\cap\delta(U)|\equiv0\pmod 2$. Also, no $C\in\mathcal P$ can have vertices in both $U$ and $\bar U$: otherwise, choosing $u\in C\cap U$ and $v\in C\cap\bar U$, the feasibility of $F_0^*$ would give a path connecting $u$ and $v$ and hence an edge of $F_0^*\cap\delta(U)$. Thus $U$ is eligible for $\Phi$, and Algorithm~\ref{alg:recursive} considers the eligible split induced by $U$.

Because $F_0^*\cap\delta(U)=\emptyset$, $F_0^*[U]$ and $F_0^*[\bar U]$ are feasible for the two restricted instances, and their objective values sum to $\mathrm{OPT}_0$. Let $\mathrm{OPT}^U$ and $\mathrm{OPT}^{\bar U}$ be the optimum values of these restricted instances. Then $\mathrm{OPT}^U+\mathrm{OPT}^{\bar U}\le\mathrm{OPT}_0$. By eligibility, both restricted instances have fewer prescribed vertices than the current instance, so the induction hypothesis applies to the two recursive calls. Let $F^U$ and $F^{\bar U}$ be the returned solutions. Then $F^U \uplus F^{\bar U}$ is feasible for the current instance, and it is a candidate considered by Algorithm~\ref{alg:recursive}. Since $U$ and $\bar U$ partition the vertex set, the objective value of $F^U\uplus F^{\bar U}$ is at most $\beta'\mathrm{OPT}^U+\beta'\mathrm{OPT}^{\bar U}\le\beta'\mathrm{OPT}_0$.

Therefore, in all cases, Algorithm~\ref{alg:recursive} considers a candidate for the current instance whose objective value is at most $\beta'\mathrm{OPT}_0$. Since the algorithm returns the best candidate, its output on the current instance also has objective value at most $\beta'\mathrm{OPT}_0$. This completes the induction. Applying the statement to the initial recursive call proves the lemma.
\end{proof}

We now prove Theorem~\ref{thm:main}. Following the iterative scheme of \cite{TVZ2022}, we define a sequence of algorithms: the algorithm at one level is used as the subroutine $\mathcal A_\beta$ in Algorithm~\ref{alg:recursive} to obtain the algorithm at the next level.

\begin{proof}[Proof of Theorem~\ref{thm:main}]
Fix $\varepsilon>0$. Set $\eta:=\min\{1,\varepsilon\}$ and $\beta_0:=2\rho+5$. Choose
\[
    N:=\left\lceil
    \frac{\log\bigl((\rho+\eta-1)/(\beta_0-1)\bigr)}
         {\log(1-\eta/10)}
    \right\rceil .
\]
Then $N$ is a fixed constant, and $1+(1-\eta/10)^N(\beta_0-1)\le \rho+\eta$. Let $\mathcal A_0$ be the algorithm that runs Algorithm~\ref{alg:basic} and outputs only $F_{\mathrm{bas}}$. For $j=1,\ldots,N$, define $\mathcal A_j$ to be the algorithm that runs Algorithm~\ref{alg:recursive} on input $(G,\ell,\pi,\Phi,\eta,\mathcal A_{j-1})$. By Lemma~\ref{lem:basic-guarantee}, $\mathcal A_0$ is a $\beta_0$-approximation algorithm. Repeatedly applying Lemma~\ref{lem:one-step}, the approximation ratio of $\mathcal A_j$ for $j=1,\ldots,N$ is at most $\beta_j$, where
\[
    \beta_{j}:=\max\left\{\rho+\eta,\ \beta_{j-1}-\frac{\eta}{10}(\beta_{j-1}-1)\right\}.
\]
This implies that, for $j=1,\ldots,N$,
\[
    \beta_j\le \max\left\{\rho+\eta,\ 1+(1-\eta/10)^j(\beta_0-1)\right\}.
\]
Hence, by the choice of $N$, the algorithm $\mathcal A_N$ has approximation ratio at most $\rho+\eta\le\rho+\varepsilon$.

It remains to verify the running time. By Lemma~\ref{lem:basic-guarantee}, $\mathcal A_0$ runs in polynomial time. Moreover, the running-time analysis of Algorithm~\ref{alg:recursive} shows that if $\mathcal A_{j-1}$ runs in polynomial time on instances whose number of prescribed vertices is bounded by a fixed constant, then so does $\mathcal A_j$. Since $N$ is a fixed constant, by induction on $j$, the algorithms $\mathcal A_1,\ldots,\mathcal A_N$ run in polynomial time on instances whose number of prescribed vertices is bounded by a fixed constant. Thus $\mathcal A_N$ is the algorithm claimed in Theorem~\ref{thm:main}. This proves the theorem.
\end{proof}

\begin{figure}[htpb]\small 
\centering
\includegraphics[height=0.3\textwidth,keepaspectratio]{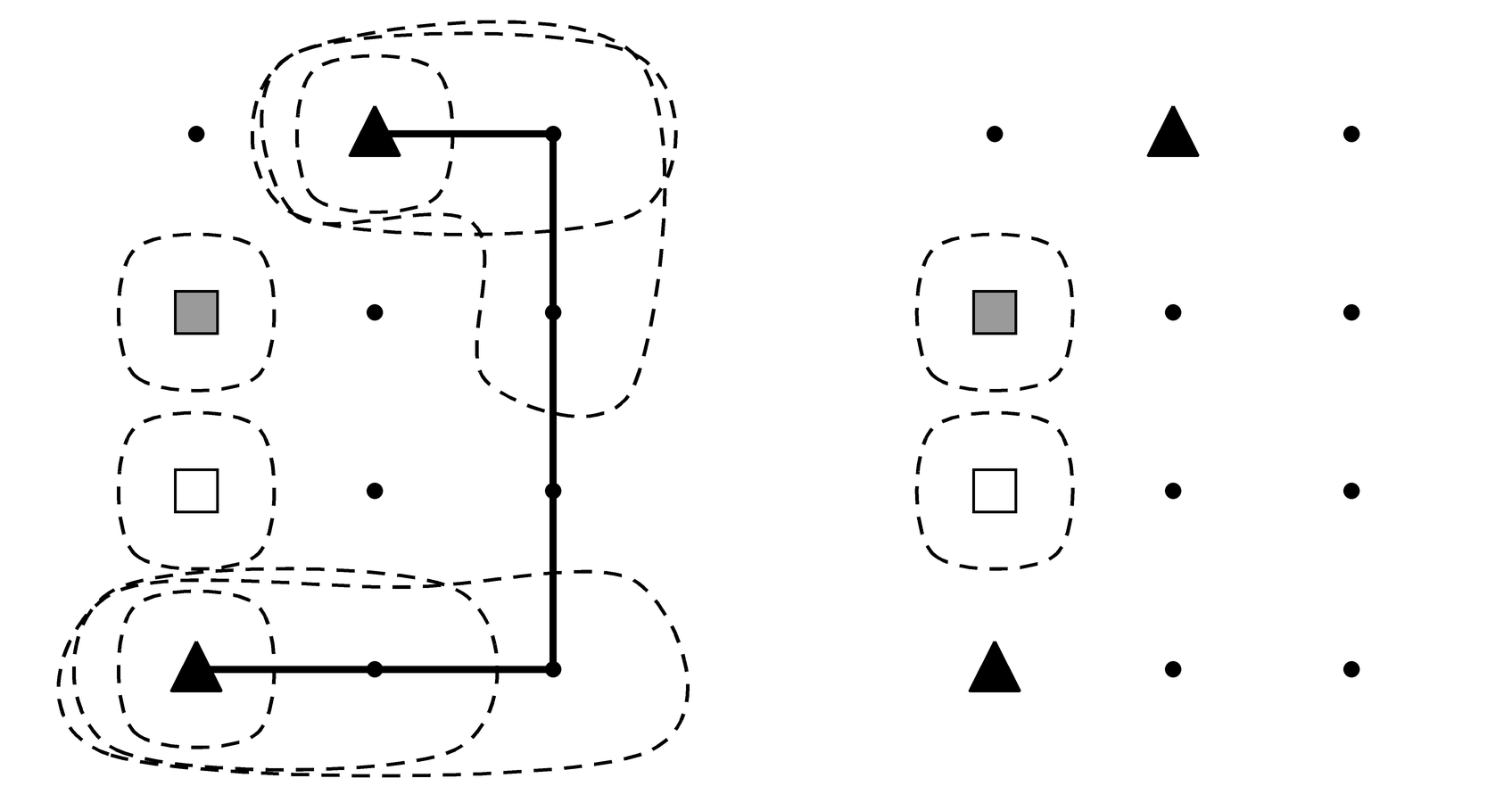}
\caption{An illustration of eligible sets. The left panel shows an optimum solution and the sets of a laminar family. The right panel shows the eligible sets in this laminar family.}
\label{fig:eligible_split}
\end{figure}

\section{Conclusion}
\label{sec:conclusion}

We have shown that PC-$\Phi$-TSP instances whose number of prescribed vertices is bounded by a fixed constant admit the same approximation guarantee as the PCTSP, up to an arbitrarily small additive loss in the approximation ratio. Combining this result with the better-than-$1.6$-approximation algorithm for the PCTSP~\cite{BKN2026}, we obtain a better-than-$1.6$-approximation for the PCS for a sufficiently small fixed $\varepsilon>0$. The same consequence applies to other prize-collecting routing problems captured by the PC-$\Phi$-TSP whenever the number of prescribed vertices is bounded by a fixed constant.

The assumption that the number of prescribed vertices is bounded by a fixed constant is used in two places: it keeps the number of local interfaces enumerated by the dynamic programs polynomially bounded, and it controls the recursion depth of Algorithm~\ref{alg:recursive}. Together, these two facts ensure that the algorithm runs in polynomial time. Extending the approach beyond this regime would require a different method.

\paragraph{Acknowledgment.}
OpenAI's ChatGPT was used to review the manuscript for readability, grammar, notation, and internal consistency, to flag possible errors or ambiguities, and to suggest minor editorial revisions. It was not used to originate the mathematical ideas, algorithms, results, or main arguments of this paper. All suggestions and flagged issues were independently evaluated by the author, who takes full responsibility for the mathematical content and the final manuscript.

\appendix

\section{Deferred Proofs}
\label{app:deferred-proofs}

\begin{proof}[Proof of Lemma~\ref{lem:basic-guarantee}]
If $I=\emptyset$, the algorithm returns the empty edge multiset, which is the only feasible solution. Hence assume $I\ne\emptyset$. We now check feasibility. By construction, $F_{\mathrm{bas}}=H\uplus J\uplus K\uplus K$. Since $J$ is a $T$-join and $K$ is added twice, $\operatorname{odd}(F_{\mathrm{bas}})=\operatorname{odd}(H)\triangle T=Q$. The Steiner forest $K$ connects all vertices in each set $C\in\mathcal P$, so the $\mathcal P$-connectivity constraint holds. By Remark~\ref{rem:standard-components}, every connected component of $(V(J),J)$ contains a vertex of $T\subseteq I$, and every nonempty connected component of $(V(K),K)$ contains an endpoint of a demand pair in $\mathcal D$, hence a vertex of $I$. Together with the $I$-connectivity of $H$, this implies that every connected component of $(I\cup V(F_{\mathrm{bas}}),F_{\mathrm{bas}})$ contains a vertex of $I$. Thus $F_{\mathrm{bas}}$ is feasible.

We next analyze the objective value. The optimum solution $F^*$ induces a feasible solution to the auxiliary PCTSP instance from $(\mathrm H1)$ by contracting $I$ to $r_I$: edges with both endpoints in $V\setminus I$ are kept, edges with one endpoint in $I$ are incident to $r_I$, and edges with both endpoints in $I$ can be ignored. After contracting $I$ to $r_I$, the resulting multigraph is connected through $r_I$ and all non-root vertices have even degree, and hence $r_I$ also has even degree. By the definition of $G_I$ and $d_I$, the induced solution has length at most $\ell(F^*)$, and its penalty is at most $\pi_V^\Phi(F^*)$. Hence the auxiliary instance has optimum objective value at most $\ell(F^*)+\pi_V^\Phi(F^*)$. The polynomial-time $\rho$-approximation algorithm for the PCTSP used in $(\mathrm H2)$ returns a solution to this auxiliary instance with objective value at most $\rho(\ell(F^*)+\pi_V^\Phi(F^*))$. Expanding this solution back to $G$ as in $(\mathrm H3)$ does not increase its length and does not lose any vertex of $V\setminus I$ visited by the solution to the auxiliary instance. Therefore $\ell(H)+\pi_V^\Phi(H)\le \rho(\ell(F^*)+\pi_V^\Phi(F^*))=\rho\mathrm{OPT}$.

We now bound the repair costs. Since $\operatorname{odd}(H\uplus F^*)=\operatorname{odd}(H)\triangle Q=T$, the multiset $H\uplus F^*$ contains a $T$-join of length at most $\ell(H)+\ell(F^*)$. Since $J$ is a minimum $T$-join, $\ell(J)\le \ell(H)+\ell(F^*)\le(\rho+1)\mathrm{OPT}$.

For the $\mathcal P$-connectivity repair, since $F^*$ connects every set $C\in\mathcal P$, it contains a Steiner forest feasible for the instance with demand pairs $\mathcal D$, of length at most $\ell(F^*)$. By Lemma~\ref{lem:sf-gw}, we have $\ell(K)\le 2\sum_{S\in\mathcal L_{\mathcal D}}y^{\mathcal D}_S$. Moreover, $\sum_{S\in\mathcal L_{\mathcal D}}y^{\mathcal D}_S$ is a lower bound on the optimum value of this Steiner forest instance. Hence $\sum_{S\in\mathcal L_{\mathcal D}}y^{\mathcal D}_S\le \ell(F^*)$, and therefore $\ell(K)\le 2\ell(F^*)\le2\mathrm{OPT}$.

Finally, since $I\cup V(H)\subseteq I\cup V(F_{\mathrm{bas}})$, we have $\pi_V^\Phi(F_{\mathrm{bas}})\le \pi_V^\Phi(H)$. Therefore $\ell(F_{\mathrm{bas}})+\pi_V^\Phi(F_{\mathrm{bas}})\le \ell(H)+\pi_V^\Phi(H)+\ell(J)+2\ell(K)$. Combining the bounds above gives
\[
    \ell(F_{\mathrm{bas}})+\pi_V^\Phi(F_{\mathrm{bas}})
    \le \rho\,\mathrm{OPT}
    +\ell(J)
    +4\sum_{S\in\mathcal L_{\mathcal D}} y_S^{\mathcal D}\le(2\rho+5)\mathrm{OPT}.
\]

The running time is polynomial by the construction of $H$, the polynomial running time of the $\rho$-approximation algorithm for the PCTSP used in $(\mathrm H2)$, and Lemmas~\ref{lem:sf-gw} and~\ref{lem:tjoin-laminar}.
\end{proof}

\begin{claim}
\label{clm:dp-candidate-feasible}
In the proof of Lemma~\ref{lem:dp-guarantee}, the edge multiset $\widehat F_L:=X^L\uplus \widehat F_0\uplus \widehat F_1\uplus\cdots\uplus \widehat F_q$ is feasible for the local PC-$\Psi_L^*$-TSP instance on $G[L]$.
\end{claim}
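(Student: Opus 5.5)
We verify the three constraints for $\widehat F_L$ with respect to $\Psi_L^*=(I_L^*,Q_L^*,\mathcal P_L^*)$, plus the containment $\widehat F_L\subseteq E(L)$. The containment is immediate: $X^L\subseteq F^*[L]$ and each $\widehat F_i\subseteq E(L_i)\subseteq E(L)$.

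\textbf{Parity.} Fix $v\in L_i$. The degree of $v$ in $\widehat F_L$ equals its degree in $\widehat F_i$ plus its degree in $X^L$, because the $\widehat F_j$ with $j\ne i$ live on disjoint vertex sets.
\begin{itemize}
\item If $v\in I_{L_i}^*$, then local feasibility of $\widehat F_i$ gives $\deg_{\widehat F_i}(v)$ odd iff $v\in Q_{L_i}^*=I_{L_i}^*\cap(Q_L^*\triangle\operatorname{odd}(X^L))$. Adding $\deg_{X^L}(v)$ shows $v$ is odd in $\widehat F_L$ iff $v\in Q_L^*$.
\item If $v\notin I_{L_i}^*$, then $v\notin V(X^L)$, and $v$ is even in $\widehat F_i$. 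Also $v\notin Q_L^*$, since $Q_L^*\subseteq I_L^*$ and $I_L^*\cap L_i\subseteq I_{L_i}^*$.
\end{itemize}
So $\operatorname{odd}(\widehat F_L)=Q_L^*$. Equivalently, this is the identity $\operatorname{odd}(F^*[L])=\operatorname{odd}(X^L)\triangle\bigtriangleup_i\operatorname{odd}(F^*[L_i])$ transplanted to the $\widehat F$'s.

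\textbf{$I_L^*$-connectivity.} Every component of $(I_L^*\cup V(\widehat F_L),\widehat F_L)$ is a union of components of the pieces $(I_{L_i}^*\cup V(\widehat F_i),\widehat F_i)$, glued by $X^L$-edges, whose endpoints all lie in $\bigcup_i I_{L_i}^*$. Each piece component contains a vertex of $I_{L_i}^*$. It therefore suffices to show that every component of the auxiliary structure reaches $I_L^*$.

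\textbf{$\mathcal P_L^*$-connectivity.} This is the main step. The key observation is that connectivity of $\widehat F_L$ among $I$-type vertices dominates that of $F^*[L]$. Define the auxiliary graph $\Gamma$ on vertex set $\bigcup_i I_{L_i}^*$. Put an edge of $\Gamma$ for each $X^L$-edge, and make each class of $\mathcal P_{L_i}^*$ a clique. By local feasibility, two vertices of $I_{L_i}^*$ in the same class of $\mathcal P_{L_i}^*$ are connected in $\widehat F_i$. Hence any two vertices connected in $\Gamma$ are connected in $\widehat F_L$.

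Now take a connected component $D$ of $(I_L^*\cup V(F^*[L]),F^*[L])$. Any path in $F^*[L]$ between two prescribed vertices decomposes into subpaths inside some $F^*[L_i]$ and $X^L$-edges. The breakpoints of these subpaths are endpoints of $X^L$-edges, hence lie in $\bigcup_i I_{L_i}^*$. Each subpath inside $L_i$ connects two vertices of $I_{L_i}^*$ within $F^*[L_i]$, so they lie in one class of $\mathcal P_{L_i}^*$. Therefore any two vertices of $D\cap\bigcup_i I_{L_i}^*$ are connected in $\Gamma$, and hence in $\widehat F_L$.

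This gives $\mathcal P_L^*$-connectivity when $L\ne V$, since each class of $\mathcal P_L^*$ is $D\cap I_L^*$ for such a $D$. When $L=V$, we have $\mathcal P_L^*=\mathcal P$, and feasibility of $F^*$ puts each $C\in\mathcal P$ inside one component $D$, so the same argument applies.

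\textbf{Finishing $I_L^*$-connectivity.} The same argument handles the leftover case. Take a piece component of $\widehat F_i$ containing $u\in I_{L_i}^*$. In $F^*[L]$, the component of $u$ contains a vertex of $I_L^*$: this is the $I$-connectivity of $F^*$ when $L=V$, and holds by construction of $I_L^*$ otherwise, since every component of $F^*$ meets $I$ and edges leaving $L$ create $I_L^*$ vertices. By the path-decomposition argument, that vertex of $I_L^*$ is connected to $u$ in $\widehat F_L$.

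The main obstacle is bookkeeping this path decomposition rigorously, in particular for components of $F^*[L]$ that lie entirely within one $L_i$.
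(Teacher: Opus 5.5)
Your proposal is correct and follows essentially the same route as the paper. Your per-vertex parity count matches the paper's symmetric-difference computation, and your auxiliary graph $\Gamma$ repackages the paper's connectivity preservation statement: split an $F^*[L]$-path at its $X^L$-edges and use $\mathcal P_{L_i}^*$ on each remaining subpath. From that statement, $\mathcal P_L^*$- and $I_L^*$-connectivity follow exactly as you describe.

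The obstacle you flag at the end is not a real one. A path inside a component of $F^*[L]$ contained in a single $L_i$ has no $X^L$-edges, so it is a single subpath with both endpoints in $I_{L_i}^*$. It is then handled directly by $\mathcal P_{L_i}^*$, which is the component partition because $L_i\ne V$.
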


\begin{proof}
We verify the three feasibility requirements for the local PC-$\Psi_L^*$-TSP instance on $G[L]$.

For the parity constraint, fix $i\in\{0,\ldots,q\}$. Since $\widehat F_i$ is feasible for the local PC-$\Psi_{L_i}^*$-TSP instance, $\operatorname{odd}(\widehat F_i)=Q_{L_i}^*$. From $I_{L_i}^*=(I_L^*\cup V(X^L))\cap L_i$ and $Q_{L_i}^*=I_{L_i}^*\cap(Q_L^*\triangle\operatorname{odd}(X^L))$, we get
$Q_{L_i}^*=(Q_L^*\cap L_i)\triangle(\operatorname{odd}(X^L)\cap L_i)$. For vertices in $L_i$, the edges of $X^L$ change parity precisely at the vertices in $\operatorname{odd}(X^L)\cap L_i$. Hence $\operatorname{odd}(\widehat F_L)\cap L_i=Q_L^*\cap L_i$. Taking the union over all parts gives $\operatorname{odd}(\widehat F_L)=Q_L^*$.

We next prove the following connectivity preservation statement: if two vertices $u,v\in I_L^*\cup V(X^L)$ lie in the same connected component of $(L,F^*[L])$, then they also lie in the same connected component of $(L,\widehat F_L)$. To prove this, take a $u$-$v$ path $P$ in $(L,F^*[L])$ and split $P$ at the edges of $X^L$. Each maximal remaining subpath is contained in some part $L_i$, and its endpoints belong to $(I_L^*\cup V(X^L))\cap L_i=I_{L_i}^*$. These endpoints are connected in $(L_i,F^*[L_i])$, so by the definition of $\Psi_{L_i}^*$ they belong to the same set of $\mathcal P_{L_i}^*$. Since $\widehat F_i$ is feasible for the local PC-$\Psi_{L_i}^*$-TSP instance on $G[L_i]$, the same endpoints are connected in $(L_i,\widehat F_i)$. Replacing each such subpath of $P$ in $(L_i,F^*[L_i])$ by a corresponding path in $(L_i,\widehat F_i)$, and keeping the edges of $P\cap X^L$, gives a $u$-$v$ walk in $(L,\widehat F_L)$, proving the preservation statement.

The $\mathcal P_L^*$-connectivity constraint follows from the preservation statement: if two vertices of $I_L^*$ belong to the same set of $\mathcal P_L^*$, then they are connected in $(L,F^*[L])$, and hence also in $(L,\widehat F_L)$.

It remains to verify the $I_L^*$-connectivity constraint. Fix an arbitrary vertex $w\in I_L^*\cup V(\widehat F_L)$, and let $D$ be the connected component of $(I_L^*\cup V(\widehat F_L),\widehat F_L)$ containing $w$. Choose $i\in\{0,\ldots,q\}$ with $w\in L_i$. Since $\widehat F_i$ is feasible for $\Psi_{L_i}^*$, the connected component of $(I_{L_i}^*\cup V(\widehat F_i),\widehat F_i)$ containing $w$ contains a vertex $w'\in I_{L_i}^*=(I_L^*\cup V(X^L))\cap L_i$. Hence $D$ contains a vertex $w'\in I_L^*\cup V(X^L)$. If $w'\in I_L^*$, then $D$ contains a vertex of $I_L^*$. Otherwise, $w'\in V(X^L)$. Since $F^*[L]$ is feasible for $\Psi_L^*$, the vertex $w'$ and some vertex of $I_L^*$ lie in the same connected component of $(I_L^*\cup V(F^*[L]),F^*[L])$, and hence also in the same connected component of $(L,F^*[L])$. By the preservation statement above, $w'$ lies in a connected component of $(L,\widehat F_L)$ containing a vertex of $I_L^*$. Since $w'\in D$, the component $D$ contains a vertex of $I_L^*$. Since $w$ was arbitrary, every connected component of $(I_L^*\cup V(\widehat F_L),\widehat F_L)$ contains a vertex of $I_L^*$.

Therefore $\widehat F_L$ satisfies the parity, $\mathcal P_L^*$-connectivity, and $I_L^*$-connectivity constraints, and is feasible for the local PC-$\Psi_L^*$-TSP instance on $G[L]$.
\end{proof}

\end{document}